\documentclass[english,11pt]{article}
\usepackage{amstext}
\usepackage{amsmath}
\usepackage{amsthm}
\usepackage{amssymb}
\usepackage{graphicx}
\usepackage{babel}
\usepackage{xcolor}
\usepackage{times}
\usepackage{xspace}
\usepackage{paralist}

\usepackage[top=1.in, bottom=1.in, left=1.in, right=1.in]{geometry}

\newcommand{\denselist}{\itemsep 0pt\parsep=0.8pt\partopsep 0pt}
\newcommand{\bitem}{\begin{itemize}\denselist}
\newcommand{\eitem}{\end{itemize}}
\newcommand{\benum}{\begin{enumerate}\denselist}
\newcommand{\eenum}{\end{enumerate}}

\newcommand{\E}[0]{{\mathbb{E} }}
\newcommand{\defn}[1]           {{\textit{\textbf{\boldmath #1\/}}}}
\newcommand{\Prob}[1]{\text{Prob}\left\{ #1 \right\}}
\newcommand{\Min}[1]{\text{min}\left\{ #1 \right\}}
\newcommand{\Po}[0]{{\mathbf{P}}}

\def\etal{\emph{et~al.}\xspace}

\DeclareGraphicsExtensions{.pdf,.png,.jpg}



\newtheorem{theorem}{Theorem}
\newtheorem{lemma}[theorem]{Lemma}
\newtheorem{claim}[theorem]{Claim}

\newtheorem{definition}{Definition}

\newtheorem{remark}{Remark}
 \newtheorem{corollary}[theorem]{Corollary}
  
\newenvironment{restate}[1]
{\par \vspace{1.5ex} \noindent \textbf{Statement of #1.}  \em}
{\par} 




\newcommand{\full}[1]{}               



\newcommand{\old}[1]{{}}



\def\E{{\cal E}}

\newlength\abovesectionskip
\abovesectionskip=3.5ex plus .2ex minus 1ex
\newlength\belowsectionskip
\belowsectionskip=2.3ex plus .2ex
\def\sectionfont{\normalfont\Large\bfseries}
\newlength\abovesubsectionskip
\abovesubsectionskip=3.25ex plus .2ex minus 1ex
\newlength\belowsubsectionskip
\belowsubsectionskip=1.5ex plus .2ex
\def\subsectionfont{\normalfont\large\bfseries}
\newlength\abovesubsubsectionskip
\abovesubsubsectionskip=3.25ex plus .2ex minus 1ex
\newlength\belowsubsubsectionskip
\belowsubsubsectionskip=1.5ex plus .2ex
\def\subsubsectionfont{\normalfont\normalsize\bfseries}
\newlength\aboveparagraphskip
\aboveparagraphskip=-3.25ex plus 1ex minus .2ex
\newlength\belowparagraphskip
\belowparagraphskip=-1em
\def\paragraphfont{\normalfont\normalsize\bfseries}
\makeatletter
\def\section{\@startsection{section}{1}{\z@}{-\abovesectionskip}%
              {\belowsectionskip}{\sectionfont}}
\def\subsection{\@startsection{subsection}{1}{\z@}{-\abovesubsectionskip}%
                 {\belowsubsectionskip}{\subsectionfont}}
\def\subsubsection{\@startsection{subsubsection}{1}{\z@}%
                    {-\abovesubsubsectionskip}{\belowsubsubsectionskip}%
                    {\subsubsectionfont}}
\def\paragraph{\@startsection{paragraph}{4}{\z@}{-\aboveparagraphskip}%
                {\belowparagraphskip}{\paragraphfont}}
\makeatother

\abovesectionskip=1ex plus .5ex minus .5ex
\belowsectionskip=.5ex plus .5ex minus .25ex
\abovesubsectionskip=.75ex plus .75ex minus .375ex
\belowsubsectionskip=.5ex plus .5ex minus .25ex
\aboveparagraphskip=.5ex plus .5ex minus .25ex
\abovesubsubsectionskip=.5ex plus .5ex minus .25ex
\belowsubsubsectionskip=-1em

\usepackage{times}

\partopsep=2pt
\parsep=0pt
\topsep=3pt

\abovedisplayskip=2pt plus 1pt minus 1pt
\belowdisplayskip=2pt plus 1pt minus 1pt
\abovedisplayshortskip=1pt plus 1pt minus 1pt
\belowdisplayshortskip=1pt plus 1pt minus 1pt

\abovecaptionskip=0pt           
\belowcaptionskip=0pt           
\intextsep=7pt plus 3pt        
\textfloatsep=7pt plus 3pt     

\def\compactify{\itemsep=0pt \topsep=0pt \partopsep=0pt \parsep=0pt}
\let\latexusecounter=\usecounter


\begin{document}
\begin{titlepage}

\title{How Complex Contagions Spread Quickly in the Preferential Attachment Model and Other Time-Evolving Networks}
\author{Roozbeh Ebrahimi\thanks{Department of Computer Science, Stony Brook University, Stony Brook, NY 11794. \texttt{\{rebrahimi,jgao,gghasemiesfe\}@cs.stonybrook.edu}} \and Jie Gao$^\ast$ \and Golnaz Ghasemiesfeh$^\ast$
\and Grant Schoenebeck\thanks{Department of Computer Science and Engineering, University of Michigan, Ann Arbor, Michigan, MI 48109. \texttt{schoeneb@umich.edu}}}

\clearpage\maketitle
\thispagestyle{empty}

\begin{abstract}
In this paper, we study the spreading speed of complex contagions in a social network. A $k$-complex contagion starts from a set of initially infected seeds such that any node with at least $k$ infected neighbors gets infected. Simple contagions, i.e., $k=1$, quickly spread to the entire network in small world graphs. However, fast spreading of complex contagions appears to be less likely and more delicate; the successful cases depend crucially on the network structure~\cite{G08,Ghasemiesfeh:2013:CCW}.

Our main result shows that complex contagions can spread fast in a general family of time-evolving networks that includes the preferential attachment model~\cite{barabasi99emergence}. We prove that if the initial seeds are chosen as the oldest nodes in a network
of this family, a $k$-complex contagion covers the entire network of $n$ nodes in $O(\log n)$ steps. We show that the choice of the initial seeds is crucial. If the initial seeds are uniformly randomly chosen in the PA model, even if we have a polynomial number of them, a complex contagion would stop prematurely. The oldest nodes in a preferential attachment model are likely to have high degrees. However, we remark that it is actually not the power law degree distribution per se that facilitates fast spreading of complex contagions, but rather the evolutionary graph structure of such models. Some members of the said family do not even have a power-law distribution. 

The main proof has two pillars. The first one is an analysis of a labeled branching process which might be of independent interest. The second pillar is an intricate coupling argument that links the extinction time of the labeled branching process to the speed of a $k$-complex contagion in the said family of time-evolving networks. The coupling argument itself relies on a careful revealing process that reveals the randomness of the network in a particular order to alleviate dependency/conditioning problems.

Using similar techniques, we also prove that complex contagions are fast in the copy model~\cite{KumarRaRa00}, a variant of the preferential attachment family, if the initial seeds are chosen as the oldest nodes.

Finally, we prove that when a complex contagion starts from an arbitrary set of initial seeds on a general graph, determining if the number of infected vertices is above a given threshold is $\Po$-complete. Thus, one cannot hope to categorize all the settings in which complex contagions percolate in a graph.

\bigskip \noindent  \textbf{keywords}: Social Networks, Complex Contagion, The Preferential Attachment Model, The Copy Model, Time-Evolving Networks.
\end{abstract}

\end{titlepage}



\section{Introduction}
Social behavior is undoubtedly one of the defining characteristics of us as a species. Social acts are influenced by the behavior of others while at same time influencing them. Understanding the dynamics of influence and modeling it in social networks is thus a key step in comprehending the emergence of new behaviors in societies. Similar to rumors or viruses, behavior changes manifest contagion like properties while spreading in a social network: Taking ``selfies'' started with a few people and suddenly became universal in a matter of months. Some of these contagions are beneficial (e.g., adopting healthy lifestyle) or profitable (e.g., viral marketing), while some others are destructive and undesirable (such as teenager smoking, alcohol abuse, or vandalism). To effectively promote desirable contagions and discourage undesirable ones, the first step is to understand how these contagions spread in networks and what are the important parameters that lead to fast spreading.

Our focus in this paper is on contagions that are \emph{complex}, contagions that require social reaffirmation from multiple neighbors, as opposed to \emph{simple} ones, which can spread through a single contact. Viruses or rumors can spread through a single contact and are thus adequately modeled by simple contagions. But when agents' actions and behavioral changes are involved, it has been argued in sociology literature that complex contagions represent most of the realistic settings -- making an important distinction between the \emph{acquisition} of information and the decision to \emph{act} on the information. While it takes only a single tie for people to hear about a new belief, technology, fad or fashion, ``it is when they see people they know getting involved, that they become most susceptible to recruitment'', as Centola and Macy~\cite{G08} explain. Many examples of complex contagions have been reported in social studies, including buying pricey technological innovations, changes in social behaviors, the decision to migrate, etc.~\cite{Coleman:1966, centola2010spread}. Studies of large scale data sets from online social networks have confirmed complex contagions as well. A study on Facebook discovered that having more than one friend already on Facebook who are not well connected to each other substantially increases the likelihood of one joining Facebook~\cite{ugander12}. A study on Twitter showed a similar phenomenon, that \emph{persistence} (the effect of repeated exposure to a topic) plays an important role in the diffusion of hashtags~\cite{Romero11}.

Simple contagions and epidemics have been extensively studied (ref. to~\cite{DK10, Jackson08, Newman10}).  Simple contagions can spread fast in social networks because these networks typically have the small world property. A single tie could leap over large network distances and spread the epidemic to a remote community. In contrast, fast spreading of complex contagions appears to be much more delicate and difficult. Preliminary research by~\cite{G08} and recent work by~\cite{Ghasemiesfeh:2013:CCW} show that for a number of small world models, in which simple contagions are super fast, complex contagions are \emph{exponentially} slower. Despite the crucial importance of complex contagions in accurately modeling a wide range of social behaviors, besides the above results, their diffusion behavior hasn't been rigorously studied much. This is possibly due to the difficulty of formal analysis of complex contagions. The difficulty arises in two aspects. First, the required multiple infections mean that subsequent exposures do not always have diminishing returns which turns out to be mathematically challenging to handle.  For example, it violates submodularity, and even subadditivity, on which many analyses depend.  Second, the superadditive character of  complex contagions means that they are integrally related to community structure, as complex contagions intuitively spread better in dense regions of a network~\cite{centola2010spread}.

We adopt a model of contagion called $k$-complex contagion from \cite{G08,Ghasemiesfeh:2013:CCW}.
A $k$-complex contagion starts from a set of initially infected seeds, and any node with at least $k$ infected neighbors gets infected. While being simple, this model elegantly captures the core difference between complex and simple contagions~\cite{G08, centola2010spread}, and despite simplicity, it is already difficult to analyze. In a clique, a $k$-complex contagion immediately infects every node as long as there are $k$ seeds.  In contrast to the dense clique, our work studies complex contagions on sparse networks, in which the average degree is constant.  We demonstrate that this model yields important contributions to the understanding of the role of network structure in social contagions.


This work seeks to enrich our understanding of complex contagions by answering fundamental questions on time-evolving graphs, which often have a power-law distribution. Additionally, this paper develops theoretical tools that enable us to overcome some of the challenges in understanding complex contagions.

We study two models of time-evolving networks. The first is the preferential attachment model, which is one of the most studied generative models with a power law degree distribution. Price in 1976 attributed the appearance of power law degree distributions to the mechanism of ``cumulative advantage'', now more commonly known as preferential attachment, phrased by Barabasi~\cite{barabasi99emergence}. The preferential attachment model considers an evolving network in which newcomers link to nodes already in the network with a probability proportional to the current degree of the nodes. Thus, nodes that have an advantage over the others in terms of degree will attract more links as the network evolves. In a slightly different model by Kleinberg~\etal~\cite{Kleinberg:1999:WGM} and Kumar~\etal~\cite{Kumar:1999:ELK,KumarRaRa00} (now called  the ``copy model''), a newcomer chooses its links uniformly randomly from existing nodes with a small probability $p$, and with probability $1-p$ copies the links of a prototype node\footnote{Which is also uniformly randomly chosen.}. Also in this setting, the newcomer chooses a node with probability proportional to its degree. Thus, the same ``rich-get-richer'' mechanism leads to a power law degree distribution. Other variations of the PA model were proposed in~\cite{fitness-model,Lattanzi:2009:AN, Pennock02winnersdont,papadopoulos12popularity}. Refer to~\cite{Mitzenmacher_abrief} for a nice survey of the history of PA models.

The power law degree distribution of these graphs means that, the number of nodes having degree $d$ is proportional to $1/d^\gamma$, for a positive constant $\gamma$.  In 1965, Derek de Solla Price showed that the number of citations to papers follow a power law distribution. Later, a number of papers studying the WWW reported that the network of webpages also has a power law degree distribution~\cite{barabasi99emergence,Broder:2000}. Besides the ubiquitous observations of power law distributions in social networks, many other networks such as biological, economic and even semantic networks were shown to have power law degree distributions~\cite{SteyversTenenbaum05,reka02statistical,Newman03thestructure} as well.

Because these time-evolving networks typically have power-law distributions, not all the nodes are homogeneous.  This mirrors reality in that people may be very different in how influential they are. They differ not only in their personal traits such as leadership, charisma, etc., but also in the positions they take in the social network. A number of previous works acknowledge such differences and compute the `network' value of a user, as the expected profit from sales to other customers she may influence to buy, the customers those may influence, and so on~\cite{Domingos:2001:MNV}.  This heterogeneity allows us to study the effect of nodes which are initially infected, which is an aspect of complex contagions not examined in previous theoretical work.


\smallskip\noindent\textbf{Our results.}
The main result of this paper is to show that complex contagions can spread fast in a general family of time-evolving networks that includes the \emph{independent}, the \emph{sequential} and the \emph{conditional} preferential attachment models~\cite{reka02statistical, Bol04,BergerBoCh10}. We prove that if the initial seeds are the oldest nodes in a network of this family, a $k$-complex contagion covers the entire network of $n$ nodes in $O(\log n)$ steps. This is surprising because these networks do not contain any community structure, per se, yet they still support fast spreading of complex contagions. Using similar techniques, we also prove the same result for the copy model~\cite{KumarRaRa00}.
%

For the preferential attachment model, when the probability of creating edges using the preferential attachment rule, $p$, is in $[0, 1)$ (ref. to Definition~\ref{def:PA}) we conjecture that w.h.p. the diameter is $\Theta(\log n)$, and thus our result is tight up to a constant factor\footnote{Dommers~\etal~\cite{Dommers} show that, if the exponent of the power-law distribution is greater than 3, then the PA model has a diameter  of $\Theta(\log n)$.  Berger~\etal~\cite{Berger05} prove that if $p \in [0, 1)$ in Definition~\ref{def:PA}, then the exponent of the power-law distribution is greater than 3.  However, while Berger~\etal use the same PA model as Definition~\ref{def:PA}, the model in Dommers~\etal is slightly different. It is beyond the scope of this paper to extend the results of Dommers~\etal to this setting, but we know of no barriers to doing so.}. This means that, if the initial seeds are properly chosen, the speed of simple and complex contagions differ only by a constant factor.
When $p = 1$, it is known that the diameter is $\Theta(\log n/\log \log n)~$\cite{Bol03,Berger05}, and so in this setting complex contagions are at most a $\log \log n$ factor slower than simple contagions.

We also show that the choice of the initial seeds is crucial. We show that there exists a polynomial threshold $f(n)$ such that if $o(f(n))$ initial seeds are chosen uniformly at random in the PA model, the contagion almost surely does \emph{not} spread! Second, we show that if $\Omega(f(n)\log n)$ initial seeds are infected, the oldest nodes and then the whole graph gets infected w.h.p. in $O(\log n)$ rounds. This signifies not only the importance of the choice of initial seeds, but also the delicacy of the diffusion for a complex contagion.

The oldest nodes in a preferential attachment model are likely to have high degrees.  However, we remark that it is actually not the power law degree distribution per se that facilitates the spread of complex contagions, but rather the evolutionary graph structure of such models. Indeed, the time-evolving network family also includes heavily concentrated degree distributions with the largest degree being $O(\log n)$. 

%
%
%

While one might hope to categorize all the settings in which complex contagions spread, we show that this is unlikely.  We prove that given a graph, a list of initially infected nodes, and a threshold, it is $\Po$-complete to decide if the number of infected nodes surpasses the threshold or not.  Thus, in some sense, the best one can do (in the worst-case) is to simulate the contagion.

\smallskip\noindent\textbf{Organization of this paper.}
In Section~\ref{sec:related}, we outline some of the related works to our paper.
Section~\ref{sec:prelim} contains the preliminary definitions and models.
Section~\ref{sec:time-evolving} contains the main result of this paper. Due to the technical nature of the result,
we provide a proof overview of it in Subsection~\ref{sec:proof-overview} first.
Subections~\ref{sec:branching}, \ref{sec:staging}, and \ref{sec:arrival-prop-speed} contain the technical details of the proof
of the main result (Theorem~\ref{thm:fast-contagion-staged} and Corollary~\ref{thm:main}).
In Appendix~\ref{app:copy}, we prove an analogous result to the main theorem about the copy model.
Appendix~\ref{sec:bootstrap} addresses the random choice of initial seeds for a complex contagion in the preferential attachment graph.
Finally, Appendix~\ref{sec:complexity} proves that computing the extent of complex contagions in general graphs is $\Po$-complete.


\section{Related Work}
\label{sec:related}
Diffusion of information/viruses has been an active research topic in epidemics, economics, and computer science. For a complete overview, refer to the references in~\cite{Ghasemiesfeh:2013:CCW}. We describe the most relevant results here.

First, we remark that our model of complex contagions belongs to the general family of \emph{threshold models} in the study of diffusions. In the \emph{threshold model}, each node has a threshold on the number of active edges/neighbors needed to become activated~\cite{Gran78} (In a $k$-complex contagion, all the nodes have the same threshold $k$).  The threshold model is motivated by certain coordination games studied in the economics literature in which a user maximizes its payoff when adopting the behavior as the majority of its neighbors.  Many of the studies focus on the stable states, and structural properties that prevent complete adoption of the advanced technology (better behaviors)~\cite{Morris97contagion}. Montanari and Sabari~\cite{Mon09} is among the few studies that relate the  steady state convergence speed of the coordination game to the network structure.

Diffusion of simple contagions in preferential attachment models has been extensively studied. In \cite{Berger05}, Berger~\etal~studied the spread of viruses where the underlying graph was considered as a PA model.  Chierichetti~\etal~\cite{Chier11} studied the spread of rumors under the \emph{push-pull strategy} model on PA graphs. Later in \cite{Chier10} and \cite{Chieri10} they improved their bound and also made a relation between the spread of rumors and conductance of a graph.  Recently, Doerr~\etal~\cite{Doerr11} proved the tighter bound on the diffusion of rumors in the PA model where the model of diffusion is a slight variation of push-pull strategy. The study of complex contagions and their speed in a preferential attachment model, as in this paper, is new.

There are a number of empirical studies on the diffusion in networks in general and role/attributes of influential nodes~\cite{Bakshy09, Adar05, Aral09, Leskovec07}. Most of the studies related to ours  examine influence on Twitter. For examples, in~\cite{cha2010},~Cha~\etal. compare three different measures of influence in Tweeter: number of followers, number of retweets, and number of mentions. Among their interesting observations, one is that ``popular users who have high in-degree are not necessarily influential in terms of spawning retweets or mentions".  In another study of influence on Tweeter, Bakshy~\etal~\cite{Bakshy11} found out that users who have been influential in the past and have a large in-degree would generate the largest cascades. More can be found in~\cite{Weng10, Romero11, Kwak10}.

In \emph{bootstrap percolation}~\cite{Chalupa79bootstrap,adler91bookstrap}, all nodes have the same threshold but initial seeds are randomly chosen. Here, the focus is to examine the threshold of the number of initial seeds with which the infection eventually `percolates', i.e. diffuses to the entire network.
Janson~\etal~\cite{Janson12} examined the bootstrap percolation process on the random Erdos-Renyi graph, $G(n,p)$, for a complete range of parameters. Among their findings, they show that when the average degree of the network is constant ($p=d/n$ for $d=O(1)$), and the size of initial seeds is $o(n)$, the process will not cover a significant part of the network. Bootstrap percolation on random regular graphs~\cite{BaloghP07}, and the configuration model~\cite{Amini10} has been shown to follow a similar pattern as $G(n,p)$.
Recently, Amini and Fountoulakis~\cite{Amini12} showed a different pattern of percolation on inhomogeneous random graphs with power-law distributions.
They show that there exists a function $\alpha(n)=o(n)$ such that if the number of initial seeds is $\ll \alpha(n)$, the process does not evolve with high probability. If, on the other hand, the number of initial seeds is $ \gg \alpha(n)$, then a constant fraction of the graph is infected w.h.p.


\section{Preliminaries}\label{sec:prelim}
First, we formally define a $k$-complex contagion process in an undirected graph. We assume $k=O(1)$.
\begin{definition}
A \defn{$k$-complex contagion} ${\rm CC}(G, k, \mathcal{I})$ is a contagion that initially infects vertices of $\mathcal{I}$ and spreads over graph $G$. The contagion proceeds in rounds. At each round, each vertex with at least $k$ infected neighbors becomes infected. The vertices of $\mathcal{I}$ are called the initial seeds.
\end{definition}

There are a number of different definitions of the preferential attachment model, in which the difference lies in the subtle ways that the links are created. We mainly work with the \emph{independent model}~\cite{reka02statistical}.

\begin{definition}\label{def:PA}
The \defn{independent preferential attachment model}, ${\rm PA}_{p,m}(n)$: We start with a complete graph on $m+1$ nodes. At each subsequent time step $t = m+2, \cdots, n$ a node $v$ arrives and adds $m$ edges to the existing vertices in the network. Denote the graph containing the first $n-1$ nodes as $G_{n-1}$. For each new vertex, we choose $w_1, w_2, \cdots, w_m$ vertices, possibly with repetitions from the existing vertices in the graph. Specifically, nodes $w_1, w_2, \cdots, w_m$ are chosen independently of each other conditioned on the past. For each $i$, with probability $p$, $w_i$ is selected from the set of vertices of $G_{n-1}$ with probability proportional to the vertices' degree in $G_{n-1}$; and with probability $1-p$, $w_i$ is selected uniformly at random. Then we draw edges between the new vertex and the $w_i$'s. Repeated  $w_i$'s cause multiple edges. Note that $deg(G_n)=2mn$. 
\end{definition}

There are two other variations of the PA model. In the \emph{conditional} model~\cite{BergerBoCh10}, a new edge is chosen conditioned on it being different from the other edges already built; in the \emph{sequential} model~\cite{Bol04}, the $m$ edges of the new node $v$ are built sequentially in the sense that the $i$-th edge of $v$ is chosen preferentially assuming the previous $i-1$ edges of $v$ have been included in the graph and their degrees are counted.
\begin{remark}
Our results also hold for the \emph{sequential} and \emph{conditional} PA models unless stated otherwise.
\end{remark}

A close relative of the ${\rm PA}$ model is the \emph{independent} \defn{copy model} of Kumar \etal \cite{KumarRaRa00}. 

\begin{definition} \label{def:CM}
The \defn{copy model}, ${\rm CM}_{p,m}(n)$, is generated as follows.  Initially, we start with a complete graph on $m+1$ vertices.  At each subsequent time step $t = m+2, \cdots, n$ a node $v$ arrives and adds $m$ edges to the existing vertices, $w_1, w_2, \cdots, w_m$, with possibly repetitions. First, the node chooses a prototype node $z$ uniformly at random. Then with probability $1-p$, $w_i$ is selected uniformly at random from the first $t-1$ vertices; and with probability $p$, $w_i$ is selected as the $i$-th outgoing neighbor of the prototype node $z$.
The $m$ edges are chosen independent of each other and hence there could be possibly multi-edges.
\end{definition}

In the \emph{conditioned} copy model, which avoids multi-edges and self-loops, all the edges are conditioned on them being different from each other. That is, the uniform random choice of the $i$-th edge is performed \emph{without} replacement. And in the case when the edge is copied from a prototype and the copied edge is already chosen, other indices of the out-going neighbors of $z$ are tried until success.
\begin{remark}
The results we prove for the independent CM model also hold for the conditioned CM model.
\end{remark}


\section{Complex Contagions in Families of Time-Evolving Networks} \label{sec:time-evolving}
In this section we prove that when initial seeds are chosen as the oldest $k$ nodes, $k$-complex contagions in
a family of time-evolving networks infect every node in $O(\log n)$ rounds. This family includes
all the variants of the preferential attachment graph.
We provide a proof overview before diving into the technical details.


\subsection{Challenges and Proof Overview}\label{sec:proof-overview}
Let $D$ be a graph created according to the PA model (Definition~\ref{def:PA}).
First, let us sketch a proof for $k = 1$, i.e. that with high probability $D$ has diameter $O(\log n)$.  Then we show where the analogous proof runs into trouble for $k > 1$.  This will motivate the machinery that we develop.

Label the vertices $1$, $2$, $3$, etc. according to their order of arrival.  We sketch a proof that the distance from an arbitrary node $v$ to vertex $1$ is $O(\log n)$ w.h.p. and the result follows from a union bound.

Consider the following procedure: 
\begin{inparaenum}[\bf a)]
\item  Start at $v$; 
\item  Follow the edge out of $v$ whose end point, $u$ has the lowest label; 
\item  If the label of $u$ is 1, stop. Otherwise, repeat the procedure for node $u$.
\end{inparaenum}

We claim that this procedure terminates in $O(\log n)$ steps with high probability.  Consider that at some point, the process is at vertex $u$.   Consider the induced subgraph on the vertices $\{1, 2, \ldots, u\}$.  If we have no prior knowledge, then it is easy to show that the lowest labelled neighbor of $u$ will be, in expectation, at most $\alpha u$ for some $\alpha < 1$.  The result follows from standard concentration arguments. 

However, the process does have knowledge of the graph when a vertex $u$ is processed. Namely, it knows the neighbors of all the vertices it has previously processed!  Fortunately, it is not too hard to show that if all these endpoints have indices greater than $u$, then the marginal distribution of edges on the induced subgraph of vertices $\{1, 2, \ldots, u\}$ remains unchanged.

Things go awry when we let $k = 2$. The first problem is that we need better concentration to be able to handle many nodes at the same time.  With $k = 1$, if we get unlucky and the first few steps did not move backward much from $v$, we are still doing at least as well as when we started.  However, when $k = 2$ and the first $\ell$ steps did not move backward much, we have $2^{\ell}$ vertices to process which is a  problem when $\ell=\Omega(1)$.

One idea of handling this is to partition the graph into stages.  Let stage $0$ contain the first $k$ vertices, while stage $i$ contains the vertices labeled between  $k(1 + \epsilon)^{i-1}$ and  $k(1 + \epsilon)^{i}$.  Thus, each stage will have a $(1 + \epsilon)$ fraction more vertices than the last.  The probability that a vertex in stage $i$ does not connect to $k$ vertices in previous stages can be upper bounded by a constant that depends on $k$ and $\epsilon$ and thus can be made arbitrarily small.  We can show that it takes at most an (expected) constant number of steps to get from one stage to the previous stages.  While this is sufficient for the proof to work in the case of $k = 1$, it is not enough for the cases of $k \geq 2$.   The reason is that only knowing the expectation does not give a tight enough bound when we process many vertices.  We need to bound the maximum rather than just the average.

To solve this problem, we model the above process as a  \defn{labeled branching process}, introduced in Subsection~\ref{sec:branching}, Definition~\ref{def:branching-process}.  A  branching process is a Markov process modeling a population where individuals in generation $i$ produce some number of individuals in generation $i+1$ according to a probability distribution. In a \defn{labeled branching process}, each individual has a label, and the probability distribution of producing an offspring is dependent on the labels of the parent/offspring.  

We intend to couple the random process that creates $D$ with a labeled branching process $B$. The labels in $B$ are proxies of the stages of nodes in $D$. After the coupling, the height of $D$ is bounded by the extinction time of $B$.   We use a potential function argument to study the extinction time of the labeled branching process.  We show that with high probability, the populations becomes extinct in $O(\log n)$ generations.
The coupling argument must make correspondence between the nodes/edges in $D$ and
nodes/branches in $B$ and thus relies on showing that the marginal probabilities of creating edges
in $B$ and in $D$ match.

The edges of $D$ are created in the arrival order of a PA graph (according to Definition~\ref{def:PA}).
However, $B$ reveals nodes/edges from last to first.
That is, the root branches (edges) are the first edges to be revealed in a branching process
and the root corresponds to the node labeled $n$.
Therefore, the coupling argument should follow a revealing process that processes
nodes in the reverse arrival order of the PA graph.

Unfortunately, at this point, more subtle problems arise.  With $k= 2$, we introduce new dependencies. Say we are processing the $100$-th arriving vertex, which has neighbors with arriving orders $33$ and $50$.  Then when we go to process vertex $50$, we have information about vertex $33$ --namely that it connects to vertex $100$.  In general, we are processing a node $u$, but the process has already revealed many outgoing edges from nodes $\{w\}_{w>u}$ to a node $s_{s<u}$, then the outgoing edges of $u$ are more likely to be connected to $s$ in the PA graph conditioned on the information revealed so far.  In contrast, in the arrival order of the PA graph, at the time $u$ created its edges, $s$ might not have had a high degree and thus the edges of $u$ would not be likely to be connected to $s$.
This ruins the above approach.  To rectify things, we need to be very careful about the order in which the edges are revealed.

Instead of revealing the neighbors of a particular vertex we query if individual edges (e.g. $(u, v)$) exist in the graph.  By the end, we have queried all the edges, but we do so in a carefully chosen order.  We do not ``process $v$" any more. Instead this ordering processes two edge points at a time.  However, when we process an edge $(u,v)$ we are able to relate the probability that this edge exists to a probability that it is created in a more natural revelation ordering (similar to the definition of PA).

Subsection~\ref{sec:staging} rigorously defines a \defn{revealing process} according
to an ordering (Definition~\ref{def:revealing}). We also introduce the revelation ordering we use on the edges (Definition~\ref{def:BF-order}: \defn{backward-forward (BF) ordering}) in this subsection. 
We show that the marginal probabilities of the individual edges conditioned on the information revealed in the \emph{backward-forward ordering} match the marginal probabilities of these edges in the arrival ordering of the PA graph.
This alleviates the dependency/conditioning problems described above.

Next, we show that the PA graph satisfies a staging property (Definition~\ref{def:staged}) which
roughly follows the staging described above: We can divide the PA graph into stages such that when edges are revealed
in the BF ordering, the probability that a node in stage $i$ does \emph{not} make
an edge to stage $i-1$ is bounded.

In Subsection~\ref{sec:arrival-prop-speed}, we show that the length of the longest
path from a node $u$ to node $1$ in staged graphs can be coupled to the extinction time of a labeled branching process $B$ (Theorem~\ref{thm:fast-contagion-staged}). One additional challenge is that the same vertex may repeat in the branching process. When this happens, we lose independence; however we show that by disallowing any children from all but the deepest labeled individual in $B$ corresponding to a particular vertex in $D$, we maintain independence without changing the height of $D$.   
 
We then conclude that the speed of a $k$-complex contagion on PA model is $O(\log n)$ if
the initial seeds are chosen as the first $k$ nodes in the graph (Corollary~\ref{thm:main}).

\subsection{Labeled Branching Processes}\label{sec:branching}
In this subsection, we describe one of our main tools in analyzing the speed of complex
contagions on time-evolving graphs. We define a \defn{labeled branching process} and analyze its extinction time.

\begin{definition}\label{def:branching-process}
For constants $m$ and $0< \alpha \leq 1$, we call a branching process a \defn{ $B(m, x, \alpha)$-labeled branching process}, if
\begin{inparaenum}
\item it starts with one node (root) labeled $x$ at depth 0 (where $x$ is a positive integer);
\item at each subsequent depth, every $i$-labeled node (where $i \neq 0$) produces $m$ children, and in expectation $\alpha m$ of the children have label $i-1$ and the rest have label $i$;
\item $0$-labeled nodes produce no children.
\end{inparaenum}
\end{definition}

The following lemma bounds the extinction time of a labeled branching process by $O(\log n)$, when there are $x=O(\log n)$ labels and $\alpha$ is a constant satisfying $\alpha >1- 1/m$.

\begin{lemma}  \label{lemma:main}
If $\alpha >1- 1/m$, and $x=c_1 \log n$ for a constant $c_1$, then the probability that $B(m, x, \alpha)$  has not died out after depth $t= c_2\log(n)$ is at most $n^{-(c_3+1)}$, where
 $c_3$ is a constant, $c_2 =  (c_3 + 1 + c_1/\log_{md}(e))/\log(1/\delta)$, $d = m\alpha/(1 - m(1 - \alpha))$, and $\delta =  m(1-\alpha)+1/m-(1-\alpha)$.
\end{lemma}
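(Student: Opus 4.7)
The plan is a first-moment argument combined with a standard MGF (Chernoff) lower-tail bound applied to individual root-to-depth-$t$ paths in the infinite $m$-ary tree. Observe that a node at depth $t$ survives in the process iff every proper ancestor has strictly positive label, because $0$-labeled vertices have no offspring. Along any fixed root-to-depth-$t$ path, the label independently decreases by one (with probability $\alpha$) or stays the same (with probability $1-\alpha$) at each step, so the label at depth $s$ equals $x - B_s$ with $B_s \sim \mathrm{Bin}(s,\alpha)$; since $B_s$ is non-decreasing in $s$, the path supports a depth-$t$ descendant iff $B_{t-1}\leq x-1$. A union bound over the $m^t$ paths in the complete $m$-ary tree (equivalently, Markov's inequality on the expected number of depth-$t$ survivors) yields
\[
\Pr[B(m,x,\alpha)\text{ not extinct by depth } t] \;\leq\; m^t \cdot \Pr[B_{t-1}\leq x-1].
\]

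Next I would bound the binomial lower tail by the standard MGF/Markov inequality: for any $D>1$,
\[
\Pr[B_{t-1}\leq x-1] \;\leq\; D^{x-1}\bigl(1-\alpha+\alpha/D\bigr)^{t-1}.
\]
The key step is to choose the MGF parameter as $D = md$, where $d = m\alpha/(1-m(1-\alpha))$ is the lemma's $d$. A one-line calculation then gives $m\bigl(1-\alpha+\alpha/(md)\bigr) = m(1-\alpha)+1/m-(1-\alpha) = \delta$, which is precisely the lemma's $\delta$. Combining with the union bound produces
\[
\Pr[\text{not extinct by depth }t] \;\leq\; m^t (md)^{x-1}\bigl(\delta/m\bigr)^{t-1} \;=\; m\,\delta^{t-1}(md)^{x-1}.
\]

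Finally, I substitute $x=c_1\log n$ and $t=c_2\log n$, take logarithms, and solve for the minimum $c_2$ that forces the right-hand side to be $\leq n^{-(c_3+1)}$. This reduces to the linear inequality $(t-1)\log(1/\delta) \geq (c_3+1)\log n + (x-1)\log(md) + \log m$, which, after absorbing the lower-order $\log m$ term, rearranges into $c_2 \geq (c_3+1+c_1\log(md))/\log(1/\delta)$; the identity $\log(md)=1/\log_{md}(e)$ then recovers exactly the expression claimed in the lemma.

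The only sanity checks are that the hypothesis $\alpha>1-1/m$ is precisely what makes both $\delta<1$ (so the bound actually decays with $t$) and $md>1$ (so the MGF/Markov step is valid): $\delta=(1-\alpha)(m-1)+1/m<(m-1)/m+1/m=1$ because $1-\alpha<1/m$, and the denominator $1-m(1-\alpha)$ in $d$ is positive by the same inequality while its numerator $m\alpha$ exceeds $m-1\geq 1$. I expect the main conceptual hurdle to be spotting that $D=md$ is the right MGF parameter to reproduce the lemma's precise constants; everything after that is routine algebra, and the fact that the MGF-optimal choice matches the stated constants is essentially what motivates the definitions of $d$ and $\delta$.
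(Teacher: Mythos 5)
Your proof is correct, and it reaches the paper's constants by a genuinely different packaging of the same underlying exponential-moment calculation. The paper works at the population level: it first computes $d$ as the expected number of ``$(i-1)$-labeled origin'' descendants of an $i$-labeled node, then defines the potential $\phi(t)=\sum_j N_t(j)(md)^j$ and shows the one-generation contraction $\E[\phi(t+1)\mid\phi(t)]\le\delta\phi(t)$, finishing with Markov. You instead linearize over the $m^t$ root-to-leaf paths and apply the standard binomial lower-tail MGF bound with parameter $D=md$; since $\E[\phi(t)]=m^t\,\E\bigl[(md)^{x-B_t}\mathbf{1}(B_t\le x-1)\bigr]$, your $m\bigl(1-\alpha+\alpha/D\bigr)=\delta$ is literally the paper's per-level contraction factor, and your final bound $m\,\delta^{t-1}(md)^{x-1}=\delta^t(md)^x/(d\delta)$ is in fact at most the paper's $\delta^t(md)^x$ because $d\delta\ge 1$ under the hypothesis $\alpha>1-1/m$. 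What your route buys is that it needs no potential function and no origin-descendant recursion for $d$ --- the constant $md$ appears simply as the (near-optimal) Chernoff parameter --- at the cost of obscuring the probabilistic meaning of $d$ that the paper's derivation supplies. Two cosmetic caveats, both shared with the paper's own proof: Definition~\ref{def:branching-process} only fixes the \emph{expected} number of decremented children, so the per-path ``independent Bernoulli($\alpha$) decrement'' picture is a slight strengthening; your alternative phrasing via the expected number of depth-$t$ survivors needs only the expectations and is the safer formulation. And both arguments implicitly require $m\ge 2$ (for $m=1$ one gets $\delta=1$ and the lemma is vacuous), which your sanity check $m\alpha>m-1\ge 1$ already presumes.
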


\begin{proof}
We refer to a node as an \defn{$(i-1)$-labeled origin} if it is $(i-1)$-labeled but its parents are not. Let $d$ be the expected number of $(i-1)$-labeled origin descendants of an $i$-labeled node $v$. First note that $d$ does not depend on $i$.  Clearly, any $(i-1)$-labeled children of $v$ are  $(i-1)$-labeled origins, and any $i$-labeled children of $v$ will produce in expectation $d$ descendants that are $(i-1)$-labeled origins. This gives us the equation that  $d =  m\alpha + m(1-\alpha)d$.  Assuming that $\alpha  > 1-1/m$ and solving, we find $d = m\alpha/(1 - m(1 - \alpha))$.
Then by independence, the expected number of $0$-labeled leaves of the root of the branching process is $d^x$.

We define a potential function $\phi(t)$ on the branching process $B$ at time $t$. Let $N_t(j)$ be the number of $j$-labeled nodes of $B$ at depth $t$.
Note that $N_0(x)=1$, and $N_0(j)=0$ for $0\leq j \leq x-1$.
Let
$$\phi(t) = \sum_{j = 1}^x N_t(j)  (md)^j.$$
We can verify that $\phi(0)$ is a polynomial in $n$, because
$\phi(0) = (md)^x = (md)^{c_1 \log n} = n^{c_1/\log_{md}(e)}$.
Next, we show that this potential function decreases exponentially with the time.
\begin{claim} \label{claim:martingale}
$\E[\phi(t+1)| \phi(t)] \leq \delta \phi(t)$, where $\delta = m(1-\alpha)+1/m-(1-\alpha)$.
\end{claim}
\begin{proof}
At level $t$, a node $v$ of label $i$ contributes $(md)^i$ to $\phi(t)$ for depth $t$. $v$'s contribution to $\phi(t+1)$ at depth $t+1$ is at most $m(\alpha(md)^{i-1} + (1-\alpha)(md)^i)$ in expectation.
We factor $(md)^{i}$ out, insert the value for $d$ from above and simplify to get $\delta$.
Notice that as long as $\alpha > 1 - 1/m$ we have that $\delta < 1$.
\end{proof}

Applying the previous claim allows us to prove by induction that $\E[\phi(t)] < \delta^t \phi(0)$.  Let $c_2 =  (c_3 + 1 + c_1/\log_{md}(e))/\log(1/\delta) $.  Then $\E[\phi(c_2 \log n)] = \delta^{c_2 \log n}\phi(0) < n^{-(c_3 + 1)}$.  If a node at time $t=c_2\log n$ existed it would contribute at least $(md)^1 \geq 1$ to $\phi$.   Thus, by Markov's inequality, we conclude that the probability that there are any nodes on the level $t$ is at most $n^{-(c_3 + 1)}$.
\end{proof}

Our notion of \emph{labeled branching process} is closely related to the notion of \emph{multitype Galton-Watson branching processes}
in the Markov process literature~\cite{Harris64}. Although the extinction time of multitype processes have been studied before~\cite{Harris64}, this literature has not explored the extinction time when the number of
types in the process is not a constant.
In our setting however, the number of types (labels) is $\Omega(1)$ and Lemma~\ref{lemma:main}
can be generalized to any number of labels bigger than $\log n$ with slight modification. In this sense, Lemma~\ref{lemma:main}
might be useful in its own right in multitype Galton-Watson branching processes theory. 

\subsection{Revealing Processes and the Staging Property}\label{sec:staging}
In this subsection, we define a \defn{staging property} notion and show that the PA model introduced in Section~\ref{sec:prelim} satisfies this property. Later we prove that a complex contagion is fast on graphs with the staging property
if it starts from the earliest nodes. The copy model, however, does not satisfy this property due to an inherent correlation between different outgoing edges of a node that come from a prototype node.

Let $\mathcal{G}$ be a distribution of graphs that is defined by a graph generation process over time.
\begin{definition}  \label{def:mgen}
    We will say that distribution $\mathcal{G}$ \defn{$m$-generates} a graph over time if:
    \begin{inparaenum}[i)]
       \item The process $\mathcal{G}$ starts with a clique at time $0$. At each time step at most one vertex arrives. The $i$-th arriving node is labeled index $i$.
       \item Each arriving vertex $v$ has at least $m$ edges to previously added vertices\footnote{These edges are possibly generated in a randomized way.}. For each edge $v \rightarrow u$, $u<v$.
    \end{inparaenum}
\end{definition}

\begin{definition}
Let $V$ be the set of vertices in an $m$-generated graph $G$,  and let $u, v \in V$, $j \in [m]$. We say that an \defn{ordered triple $(u, v, j)$} is \defn{oriented}, if $u < v$ in $G$'s arrival order.
\end{definition}
An \emph{oriented triple $(u, v, j)$} corresponds to the $j$-th edge that could be (potentially) issued by node $v$ to $u$ in the (randomly) generated graph. 

\begin{definition}
We define an \defn{arrival-time (AT) ordering} on triples as follows:
$(u_1, v_1, j_1) < (u_2, v_2, j_2)$ if
\begin{inparaenum}[a)]
\item $v_1 < v_2$ or;
\item if $v_1 = v_2$ and $j_1 < j_2$ or;
\item if $v_1 = v_2$ and $j_1 = j_2$ and $u_1 > u_2$.
\end{inparaenum}
\end{definition}
The \emph{AT ordering} is a sequential ordering of the edges that corresponds to the order that they are built in the evolving graph $G$.
That is, a node that arrives earlier will have its edges placed earlier. For the edges placed by the same node $v$, we sort them according to the inverse arriving order of their tails.

\begin{definition}\label{def:revealing}
Given an $m$-generative model $\mathcal{G}$ and an ordering $\mathcal{O}$, we define a \defn{revealing process $R_{\mathcal{O}}(\mathcal{G})$}.
We process all the oriented triples according to $\mathcal{O}$.  When processing a triple $(u, v, j)$, we reveal if the $j$-th edge from $v$ connects to $u$.  Let \defn{$\psi_{(u, v, j)}$} be the indicator r.v. for this event. Also, let \defn{$\phi(W, v, j)$} be the event that the $j$-th edge of $v$ lands in a set $W$ (where for all $u \in W$, $u < v$)\footnote{Note that $\phi(W, v, j)=\bigvee_{u \in W; u< v} \psi_{(u, v, j)}$.}.

When the first triple corresponding to an outgoing edge of a node is visited in the
ordering $\mathcal{O}$, the filter reveals the random choices of the generative model $\mathcal{G}$ specific to the node itself (not the edge choices). 

Let $L$ be a graph generated from $\mathcal{G}$ but with the $j$-th edge issued by vertex $v$ missing.  We define \defn{$p_{(u, v, j), L, \mathcal{O}}$} as the probability that $\psi_{(u, v, j)}$ occurs in $R_{\mathcal{O}}(\mathcal{G})$ when the triple $(u, v, j)$ is processed conditioned on the fact that the edges revealed thus far are consistent with $L$\footnote{Note that the revealing process doesn't know the edges of $L$ yet to be revealed, and so this probability is independent of that.}.  Define \defn{$p_{(W, v, j), L, \mathcal{O}}$} analogously.

We define a \defn{coin, $c_{(u, v, j), L, \mathcal{O}}$} to be a uniformly distributed r.v. in the interval $[0, 1]$.  We use $c_{(u, v, j), L, \mathcal{O}}$ to determine the event $\psi_{(u, v, j)}$ in the revealing process $R_{\mathcal{O}}(\mathcal{G})$ conditioned on the fact that the information revealed thus far is consistent with $L$.   If  $c_{(u, v, j), L, \mathcal{O}} \leq p_{(u, v, j), L, \mathcal{O}}$  then  $\psi_{(u, v, j)}$ \defn{occurs}, and o.w. \defn{it does not}.
\end{definition}


\begin{definition}\label{def:staged}
 Let $\mathcal{G}$ be an $m$-generative model and $R_{\mathcal{O}}(\mathcal{G})$ be a revealing process.
Let $G$ be any graph of size $n$ generated from $\mathcal{G}$.
We say that $\mathcal{G}$ satisfies the \defn{$(R_{\mathcal{O}},r, m, \alpha)$-staging property} if
there exists an ordering on the vertices of $G$ and an ordered partition $S_0, S_1, \ldots, S_r$ of the nodes into $r+1$ stages (the nodes in stage $i$ are ordered before those of $i + 1$) such that:
\begin{enumerate}[i)]\denselist
    \item $|S_0| < \log(n)$; 
    \item Each vertex has $m$ edges to nodes prior in the ordering;
    \item Assume that node $v$ is in stage $i$.  Let $W$ be the set of nodes in stage $i$ that precede $v$.  Let $L$ be any graph generated from $\mathcal{G}$ but with the $j$-th edge issued by  vertex $v$ missing. Then $ p_{(W, v, j), L, \mathcal{O}} \leq (1 - \alpha)$.
\end{enumerate}

A graph $H$ generated by a model $\mathcal{G}$ with staging property is said to be \defn{$(r, m, \alpha)$-staged}.
\end{definition}

The \defn{backward-forward ordering} sorts the oriented triples by the decreasing order of the landing vertices, and for nodes with the same landing vertices sorts them by the increasing order of the shooting vertices.

\begin{definition}\label{def:BF-order}
We define a \defn{backward-forward (BF) ordering} on triples as follows:
$(u_1, v_1, j_1) < (u_2, v_2, j_2)$ if
\begin{inparaenum}[a)]
\item $u_1 > u_2$ or;
\item if $u_1 = u_2$ and $v_1 < v_2$ or;
\item if $u_1 = u_2$ and $v_1 = v_2$ and $j_1 < j_2$.
\end{inparaenum}
\end{definition}

The \emph{BF order} is an interesting ordering for us because
of two reasons:
\begin{inparaenum}[a)]
\item It processes the nodes in the reverse arrival order and facilitates the coupling argument of Subsection~\ref{sec:arrival-prop-speed}; and
\item We can prove that the preferential attachment model satisfies the
\emph{$(R_{BF},r, m, \alpha)$-staging property}.
\end{inparaenum}

We start with the following lemma
that shows that according to the revealing processes $R_{BF}$ and $R_{AT}$,
the edge probabilities are in fact equal in ${\rm PA}_{p,m}(n)$.

\begin{lemma} \label{lem:two-histories}
Let $L$ be a graph generated from the ${\rm PA}_{p,m}(n)$ model but with the $j$-th edge issued by vertex $v$ missing.  Then in the PA-model, $p_{(u, v, j), L, AT} = p_{(u, v, j), L, BF}$.
\end{lemma}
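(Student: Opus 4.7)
The plan is to show both $p_{(u,v,j), L, AT}$ and $p_{(u,v,j), L, BF}$ equal the explicit quantity
\[
f(L) \;:=\; \frac{\pi^v_u}{\sum_{u'\le u}\pi^v_{u'}},\qquad \pi^v_{u'} \;=\; p\cdot\frac{d_{u'}^{G_{v-1}}}{2|E(G_{v-1})|} + \frac{1-p}{v-1},
\]
with the degrees computed in the $G_{v-1}$ specified by $L$. The AT side is immediate: by the time the revealing process reaches $(u,v,j)$ in AT, it has already revealed all of $G_{v-1}$ and the first $j-1$ edges of $v$, and asserts only that the $j$-th edge of $v$ does not land on any $u'>u$. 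Since each edge of $v$ is drawn independently from $\pi^v(G_{v-1})$ in the independent PA model, Bayes' rule immediately yields $p_{(u,v,j), L, AT}=f(L)$. So the bulk of the work is to show the BF side also equals $f(L)$.

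For BF the argument is more delicate because the BF history neither fully pins down $G_{v-1}$ (it omits every edge of $G_{v-1}$ whose lower endpoint is strictly less than $u$) nor is disjoint from later edges (it reveals partial information about edges of vertices $v'>v$, namely which of them land on vertices $>u$). My plan is to split the analysis into two steps. First, conditioned on any specific $G_{v-1}$ and on the BF event $A_{BF}$, I will show that the posterior of $X:=e_{v,j}$ equals $f(G_{v-1})$. The key observation is that once we condition on $X\le u$, the specific value of $X$ never contributes to the degree of any vertex of index $>u$; combined with the fact that $|E(G_{v'-1})|$ is deterministic in the PA model, an induction on $v'>v$ shows that the probability that any edge of $v'$ lands on a specific high vertex $u'>u$ is independent of the specific value $X\in\{1,\dots,u\}$. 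Together with the fact that, given $G_{v-1}$, the other $m-1$ edges of $v$ are also independent of $X$, this shows that no BF constraint apart from $X\le u$ itself carries information about the value of $X$.

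Second, I will show that, although BF does not reveal all of $G_{v-1}$, the ratio $f(G_{v-1})$ is in fact constant across all $G_{v-1}$ consistent with the BF history, so the expectation over the BF posterior of $G_{v-1}$ equals $f(L)$. This ratio depends only on $d_u^{G_{v-1}}$ and $\sum_{u'\le u}d_{u'}^{G_{v-1}}$, and both are pinned by BF: $d_u^{G_{v-1}}$ equals $m$ plus the number of revealed triples $(u,v'',j')$ with $u<v''<v$ that assert an incident edge, and each $d_{u'}^{G_{v-1}}$ for $u'>u$ equals $m$ plus the count of revealed high-to-high and high-to-$u$ triples incident to $u'$. Thus $\sum_{u'>u}d_{u'}$ is pinned, and since $|E(G_{v-1})|$ is deterministic, $\sum_{u'\le u}d_{u'}=2|E(G_{v-1})|-\sum_{u'>u}d_{u'}$ is pinned as well. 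This matches the AT value. The main obstacle will be rigorously verifying the inductive step in the first part---in particular, marginalizing over the unrevealed low-landings of later edges without breaking the stated independence from $X$.
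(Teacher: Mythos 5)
Your proposal is correct and follows essentially the same route as the paper's proof: both sides are identified with the same Bayes-updated expression, whose inputs ($d_u^{G_{v-1}}$, $\sum_{u'\le u} d_{u'}^{G_{v-1}}$, and the deterministic total degree) you show are pinned by the BF history, while the remaining BF information (the high landings of later edges and the mere existence of dangling edges) is argued to be conditionally independent of the landing site within $\{1,\ldots,u\}$. Your two-step conditioning on $G_{v-1}$ and the induction over later vertices simply make rigorous what the paper asserts informally, namely that the filter carries ``no information about the distribution of the dangling edges.''
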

\begin{proof}
First, we inspect $p_{(u, v, j), L, AT}$. At the time of processing $(u, v, j)$, $R_{AT}(\mathcal{G})$ has revealed the degree of $u$ and the sum of degree of all nodes before $u$. We also know that the $j$-th
edge did not connect to any node with indices greater than $u$. Conditioned
on these, we have
\begin{align*}
p_{(u, v, j), L, AT}=\frac{p' d(u)}{\sum^u_{i=1} d(i)}+\frac{1-p'}{u},
\quad \text{where } p' = \frac{p\frac{\sum_{u' \leq u} d(u')}{\sum_{v' \leq v} d(v')}}{p\frac{\sum_{u' \leq u} d(u')}{\sum_{v' \leq v} d(v')} + (1 - p)\frac{u}{v}}.
\end{align*}
Note that $p'$ is the probability that we choose the $j$-th edge of $v$ preferentially.  We must update this using Bayes' theorem because the probability that the $j$-th edge is chosen preferentially can change conditioned on the fact that it is not attached to later arriving nodes.

As for $p_{(u, v, j), L, BF}$, $R_{BF}(\mathcal{G})$ has revealed the degree of $u$ and the sum of degree of all
nodes before $u$. However, there is extra information in the revealed filter.
There is information about edges landing on nodes after $v$ (with bigger indices), and
there is information about the number of edges that go from nodes with bigger
indices than $v$ to nodes with smaller indices than $u$. However, since the filter
hasn't revealed the degree of nodes before $u$, the filter contains no information about the \emph{distribution} of
these ``dangling'' edges. The preferential attachment is oblivious to edges which
landed after $v$. Hence, if there is no information about the distribution of the dangling edges,
the probability of $p_{(u, v, j), L, BF}$ is independent of the extra information in the filter:
\begin{align*}p_{(u, v, j), L, BF}=\frac{p' d(u)}{\sum^u_{i=1} d(i)}+\frac{1-p'}{u}, \quad
\text{where } p' = \frac{p\frac{\sum_{u' \leq u} d(u')}{\sum_{v' \leq v} d(v')}}{p\frac{\sum_{u' \leq u} d(u')}{\sum_{v' \leq v} d(v')} + (1 - p)\frac{u}{v}}.
\end{align*}
\end{proof}

%

\begin{corollary}\label{cor:two-histories}
Let $L$ be a graph generated from ${\rm PA}_{p,m}(n)$ but with the $j$-th edge out of vertex $v$ missing.  Then $p_{(W, v, j), L, AT} = p_{(W, v, j), L, BF}$.
\end{corollary}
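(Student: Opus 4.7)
The plan is to reduce the set-valued statement to the single-edge statement already handled by Lemma~\ref{lem:two-histories}. The $j$-th edge issued by $v$ lands on exactly one vertex, so (by the footnote in Definition~\ref{def:revealing}) the event $\phi(W,v,j)$ is the disjoint union $\bigsqcup_{u\in W,\,u<v}\psi_{(u,v,j)}$. Additivity of probability then gives, for each ordering $\mathcal{O}\in\{AT,BF\}$,
\[
p_{(W,v,j),L,\mathcal{O}}\;=\;\sum_{u\in W}\,p_{(u,v,j),L,\mathcal{O}},
\]
and applying Lemma~\ref{lem:two-histories} term-by-term on the right-hand side immediately yields $p_{(W,v,j),L,AT}=p_{(W,v,j),L,BF}$.

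The only substantive step is justifying the displayed additivity identity under each ordering, and this should be done by unpacking what each revealing process knows about the fate of the edge $e=(v,j)$ when a triple in $W$ is processed. Under the AT ordering the triples $\{(u,v,j):u<v\}$ are visited consecutively in order of decreasing $u$, so the chain rule applied to the telescoping disjoint events $\psi_{(u,v,j)}$ (each conditioned on all previously visited triples with the same $(v,j)$ being false, in agreement with $L$) reproduces exactly the sum. Under the BF ordering the $W$-triples are scattered among many unrelated ones, but the proof of Lemma~\ref{lem:two-histories} identifies the only features of the BF-history that affect the conditional probability of $\psi_{(u,v,j)}$: namely $d(u)$, $\sum_{u'\le u}d(u')$, and the status of $\psi_{(u'',v,j)}$ for $u''>u$. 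Since consistency with $L$ forces those last indicators to be false (the edge $e$ has not yet landed), the same disjoint telescoping yields the same sum.

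The main (minor) obstacle is careful bookkeeping of what ``conditioned on consistency with $L$ under $\mathcal{O}$'' reveals about $e$ at each step. Once one observes that both orderings visit the $W$-triples under the same effective conditioning ``$e$ has not yet landed on any previously processed vertex,'' the set version follows from additivity combined with the edgewise equality supplied by Lemma~\ref{lem:two-histories}.
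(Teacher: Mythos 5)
Your overall strategy---reduce the set version to the single-edge equality of Lemma~\ref{lem:two-histories} and transfer it term by term---is in the same spirit as the paper, which does this by coupling the coins $c_{(u,v,j),L,AT}$ and $c_{(u,v,j),L,BF}$ for $u\in W$ and fixing all other outcomes to agree with $L$. However, your displayed additivity identity
$p_{(W,v,j),L,\mathcal{O}}=\sum_{u\in W}p_{(u,v,j),L,\mathcal{O}}$
is false, and this is a genuine gap rather than a notational slip. The quantities $p_{(u,v,j),L,\mathcal{O}}$ are \emph{sequential} conditional probabilities: by Definition~\ref{def:revealing}, each is conditioned on everything revealed before $(u,v,j)$ is processed, which (since both orderings visit the triples $(u,v,j)$ in decreasing order of $u$, and consistency with $L$ forces the earlier indicators to be false) includes the event that the $j$-th edge of $v$ has not landed on any $u'>u$. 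So these are hazard rates, not probabilities of disjoint pieces of a common sample space, and they do not add. Concretely, the formula in Lemma~\ref{lem:two-histories} gives $p_{(1,v,j),L,\mathcal{O}}=p'\cdot\frac{d(1)}{d(1)}+\frac{1-p'}{1}=1$, so for any $W$ containing vertex $1$ and one other vertex the right-hand side of your identity exceeds $1$ while the left-hand side is a probability.

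The correct combination is multiplicative/telescoping rather than additive: processing the $W$-triples in decreasing order of $u$, the chain rule expresses $p_{(W,v,j),L,\mathcal{O}}$ as $\sum_{u\in W}p_{(u,v,j),L,\mathcal{O}}\prod_{u'>u}\bigl(1-q_{u'}\bigr)$ with the appropriate not-yet-landed factors $q_{u'}$, not as the bare sum. Your conclusion is still salvageable, because $p_{(W,v,j),L,\mathcal{O}}$ is a fixed function of the family $\{p_{(u,v,j),L,\mathcal{O}}\}_{u}$ and Lemma~\ref{lem:two-histories} makes each of these agree between $AT$ and $BF$; the cleanest way to finish is exactly the paper's one-line coupling of the coins for $u\in W$, under which the first $u$ (if any) on which the edge lands is identical in the two revealing processes. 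As written, though, the proof rests on an identity that is quantitatively wrong, and your appeal to ``the chain rule \ldots reproduces exactly the sum'' does not hold.
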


\begin{proof}  For each revealing process, when processing $(u, v, j)$ for $u \in W$ couple the coins.  Otherwise, choose the outcome that is consistent with $L$.
\end{proof}


\begin{lemma} \label{lem:PA-isStaged}
   The ${\rm PA}_{p,m}(n)$, $m$-generates a network
   and satisfies the $(R_{BF},\log n,m,2/3)$-staging property.
\end{lemma}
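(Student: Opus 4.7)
The plan is to address the two halves of the claim separately. That ${\rm PA}_{p,m}(n)$ $m$-generates a graph is immediate from Definition~\ref{def:PA}: the process starts at time $0$ with an $(m{+}1)$-clique, and thereafter each arriving vertex issues exactly $m$ edges to previously-arrived vertices. The substantive task is to exhibit an ordering of the vertices and a partition into stages that witnesses the $(R_{BF}, \log n, m, 2/3)$-staging property.

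I would order the vertices by arrival index and use a geometric partition. Fix a constant $\beta \in (1, 3/2)$, say $\beta = 4/3$, and a constant $c_0 \in (0, 1)$ large enough that $c_0 \log n > m+1$ for all $n$ past some threshold. Put $a_0 = \lfloor c_0 \log n \rfloor$ and $a_i = \lfloor \beta a_{i-1} \rfloor$ for $i \geq 1$, then set $S_0 = \{1, \ldots, a_0\}$ and $S_i = \{a_{i-1}+1, \ldots, a_i\}$ for $i \geq 1$. Condition~(i) of Definition~\ref{def:staged} holds since $|S_0| < \log n$; condition~(ii) follows from Definition~\ref{def:PA}; and the number of stages needed to exhaust $\{1, \ldots, n\}$ is $r = O(\log n)$.

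The substance lies in verifying condition~(iii). Fix $v \in S_i$ with $i \geq 1$ and let $W = S_i \cap [1, v{-}1]$. The case $W = \emptyset$ is trivial, so assume $W = \{a_{i-1}+1, \ldots, v-1\}$ and in particular $\max W = v-1$. By Corollary~\ref{cor:two-histories} it suffices to bound $p_{(W, v, j), L, AT}$. In the AT ordering, no triple $(u', v, j)$ with $u' > v-1$ could have been processed before the $W$-block begins (none exists), so the $j$-th edge of $v$ picks up no conditioning beyond $G_{v-1}$ being determined by $L$. Hence
\[
p_{(W, v, j), L, AT} = p \cdot \frac{D_W}{S_v} + (1-p) \cdot \frac{|W|}{v-1},
\]
with $D_W = \sum_{u \in W} d(u)$, $S_v = \sum_{u < v} d(u)$, and degrees taken in $L$ restricted to $G_{v-1}$.

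The heart of the argument, and the step I expect to be the main obstacle, is a bound $D_W \leq 2m|W|$ that holds deterministically for every $L$ in the support of $\mathcal{G}$. An averaging/expectation argument would not suffice because the staging property must hold for every such $L$. The enabling observation is structural: $W$ is a contiguous block abutting the right boundary $v-1$ of $G_{v-1}$, so every in-edge into a node $u \in W$ in $G_{v-1}$ originates at some $v' \in (u, v-1] \subseteq W$. Thus all in-edges of $W$ are internal, and their count is bounded by the out-degree budget $m|W|$; combining with the $m|W|$ out-contributions gives $D_W \leq 2m|W|$. Together with $S_v = 2m(v-1) - O(m^2) \geq 2mv(1 - o(1))$, both summands in the displayed expression are dominated by $|W|/(v-1) \cdot (1 + o(1))$. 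Since $v \leq a_i \leq \beta a_{i-1}$, we get $|W|/(v-1) \leq 1 - a_{i-1}/(a_i - 1) \leq 1 - 1/\beta + O(1/\log n)$, which for $\beta = 4/3$ is $1/4 + o(1)$, strictly below $1/3 = 1 - \alpha$ for $n$ sufficiently large. This establishes condition~(iii) with $\alpha = 2/3$, completing the proof.
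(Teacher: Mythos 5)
Your proof is correct and follows essentially the same route as the paper's: geometric stages ordered by arrival, reduction from the BF to the AT ordering via Corollary~\ref{cor:two-histories}, and a deterministic degree-mass bound that yields $p_{(W,v,j),L,AT}<1/3$. The only real difference is cosmetic: the paper lower-bounds the degree mass of stages $0,\dots,i-1$ by $2m(3/2)^i$ (so a preferential edge escapes $W$ with probability at least $2/3$), whereas you upper-bound the degree mass of $W$ itself by $2m|W|$ via the observation that all in-edges of $W$ in $G_{v-1}$ are internal to $W$ --- since $W$ and the earlier stages partition $\{1,\dots,v-1\}$, these are the same accounting identity read from opposite ends (your choice of ratio $4/3$ versus the paper's $3/2$, and of a $\Theta(\log n)$-size $S_0$ versus size $2$, are likewise immaterial).
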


\begin{proof}
That ${\rm PA}_{p,m}(n)$ is an $m$-generated network simply comes from the definition.
We define the stages as follows. Stage $S_{0}$ contains the first $2$ nodes and for each
$i$, $S_{i}=\{v_{s}|(3/2)^{i}< s \leq (3/2)^{i+1}\}$.

Let $L$ be a graph generated from ${\rm PA}_{p,m}(n)$ but with the $j$-th edge out of vertex $v$ missing. Let $W$ be the set of nodes in stage $i$ that arrived before $v$.
By Corollary~\ref{cor:two-histories}, we have that $p_{(W, v, j), L, AT} = p_{(W, v, j), L, BF}$.

We bound $p_{(W, v, j), L, AT}$. In the case that the edge of $v$ was chosen uniformly,
the probability of choosing an edge in stage $S_{i-1}$ or smaller is greater than $2/3$.
In the case that the edge was chosen preferentially, we know that the total sum of nodes before $u$ is $2m (v-1)$, and the sum of degrees for
the nodes in stage $i-1$ or smaller is at least $2m(3/2)^{i}$. Since $v<(3/2)^{i+1}$, then the probability that the preferentially selected neighbor is among the first $i-1$ stages is bigger than $2/3$. Hence $p_{(W, v, j), L, BF}=p_{(W, v, j), L, AT}<1/3$.
\end{proof}


\subsection{Speed of Complex Contagions in Graphs With Staging Property} \label{sec:arrival-prop-speed}
In this subsection, we prove that complex contagions on graphs with staging property are fast with high probability
if the initial seeds are the oldest nodes.
We show that the speed of complex contagions on graphs with staging property is bounded by the length of the longest path to the initial seeds, which is then bounded by the depth of an appropriate branching process using a coupling argument.

Our main theorem states that starting from the oldest nodes, a $k$-complex contagion on graphs with $(R_{BF}, O(\log n), k, \alpha)$-staging property (where $\alpha > 1 - 1/k$) is fast with high probability. It is noteworthy to observe that
the same scenario also happens for $k$-complex contagions on graphs with $(R_{BF},O(\log n), m, \alpha)$-staging property
where $ m\geq k$. We assume both $k$ and $m$ to be constant parameters.

\begin{theorem} \label{thm:fast-contagion-staged}
Let $\mathcal{G}(n)$ be a network that satisfies the \emph{$(R_{BF},x, k, \alpha)$-staging property} where $\alpha > 1 - 1/k$, and $x=O(\log n)$.
Also let $\mathcal{I}$ be
the set of first $k$ arrived vertices in $\mathcal{G}(n)$. A $k$-complex contagion ${\rm CC}(\mathcal{G}(n), k, \mathcal{I})$ will
infect the entire graph with probability $1 - 1/n^{c_3}$ in time $\leq c_2 \log n$ where $c_2 =(c_3 + 1+ x/(\log n \log_{kd}(e)))/\log(1/\delta)+1$, $d = k\alpha/(1 - k(1 - \alpha))$, and $\delta =  k(1-\alpha)+1/k-(1-\alpha)$.
\end{theorem}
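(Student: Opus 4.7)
The plan is, for each vertex $u \in \mathcal{G}(n)$, to construct an \emph{infection-witness tree} $T_u$ rooted at $u$ in which every internal node has $k$ distinct graph-neighbor children and every leaf lies in $S_0$. Combined with the easy observation that $S_0$ saturates within $|S_0| < \log n$ rounds (since the first $k$ arrivals form an all-infected clique in the $m$-generative graph $\mathcal{G}(n)$), a short induction on depth bounds the round at which $u$ becomes infected by (depth of $T_u$) $+\log n$. To control the depth of $T_u$, I would couple the construction of $T_u$ to an instance of the labeled branching process $B(k,x,\alpha)$ from Definition~\ref{def:branching-process} and invoke Lemma~\ref{lemma:main}; a union bound over the $n$ choices of $u$ then finishes the proof.

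Concretely, I would build $T_u$ by exploring $\mathcal{G}(n)$ from $u$ using the revealing process $R_{BF}$ of Definition~\ref{def:revealing}. At each tree node whose associated graph vertex $v$ lies in stage $S_j$, I would reveal the $k$ outgoing triples of $v$ in BF order, declare the resulting $k$ endpoints to be $v$'s children in $T_u$, and recurse. Each tree node is assigned a \emph{label}: the root gets label $x$, and a child $w$ of a label-$i$ node $v$ receives label $i$ if $w$'s graph stage equals $v$'s graph stage, and label $i-1$ otherwise. (Note that with this rule the label of a node always stays at least its graph stage, so label $0$ forces membership in $S_0$.) The $(R_{BF},x,k,\alpha)$-staging property (Definition~\ref{def:staged}(iii)) says that each outgoing edge of $v$ lands in $v$'s own stage with probability at most $1-\alpha$, conditional on everything revealed previously. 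Crucially, Lemma~\ref{lem:two-histories} and Corollary~\ref{cor:two-histories} guarantee that this BF-order conditional probability agrees with the natural AT-order probability, so the staging bound applies cleanly to our exploration; hence each of $v$'s $k$ children independently gets label $i-1$ with probability $\geq \alpha$, yielding stochastic domination of the labeled $T_u$ by $B(k,x,\alpha)$.

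Two subtle points require care. First, a graph vertex $w$ may appear at multiple nodes of $T_u$, and re-revealing its outgoing edges at each occurrence would destroy the independence needed for the coupling. I would resolve this by letting only one copy of $w$ (the one with the smallest label, i.e.\ the deepest in the branching sense) retain children in $T_u$, with the other copies turned into leaves. This preserves stochastic domination by $B(k,x,\alpha)$ and does not affect the infection argument, since the retained copy still witnesses how $w$ becomes infected and any pruned subtree can simply inherit that witness. Second, when the branching process hits label $0$ we have only guaranteed that the corresponding leaf of $T_u$ lies in $S_0$, not that it is an initial seed; but the $S_0$-saturation observation above contributes at most $\log n$ extra rounds, which is exactly what the trailing ``$+1$'' inside $c_2$ absorbs (contributing a full $\log n$ to the total $c_2 \log n$). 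A union bound over the $n$ starting vertices then converts Lemma~\ref{lemma:main}'s failure probability $n^{-(c_3+1)}$ into the claimed $n^{-c_3}$.

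The main obstacle will be ensuring that the coupling \emph{composes} across the many revealing steps: after an arbitrary partial history of edges has been revealed by $R_{BF}$, the conditional distribution of the next edge must still match the ``fresh'' distribution that the staging bound was designed for. The BF ordering, together with the marginalization identity of Lemma~\ref{lem:two-histories}/Corollary~\ref{cor:two-histories}, is precisely what makes this go through. The secondary delicate point is the repeated-vertex pruning rule, which must be chosen so as to respect both the branching-process coupling and the $k$-distinct-neighbors requirement of the infection argument; this is exactly what forces the deepest-copy-only choice.
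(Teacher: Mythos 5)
Your proposal is correct and follows essentially the same route as the paper's proof: bounding infection time by the depth of a witness tree (plus $\log n$ for saturating $S_0$), coupling that tree to the $B(k,x,\alpha)$-labeled branching process via the $R_{BF}$ revealing process and the staging property, resolving repeated vertices by giving children only to the deepest copy, and finishing with Lemma~\ref{lemma:main} and a union bound. The only cosmetic difference is that you invoke Lemma~\ref{lem:two-histories}/Corollary~\ref{cor:two-histories} inside this argument, whereas in the paper those are only needed to verify that the PA model satisfies the staging property, which the theorem here simply assumes.
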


\begin{proof}
Consider a directed subgraph of $\mathcal{G}(n)$, in which we only keep the $k$ edges from each vertex pointing to the smaller labeled vertices. We say $u$ follows $v$ if there is a directed edge from $u$ to $v$. Node $u$ becomes infected in the next round if it follows $k$ infected neighbors. By removing extra edges and making the propagation directed we only make the contagion spread slower. Thus, we get an upper bound on the speed.

We prove by induction that the time it takes to infect a vertex $v$ is no greater than the length of the \emph{longest} path from $v$ to the vertices in $\mathcal{I}$ in this directed graph. The first $k$ vertices have longest paths of length 0 to $\mathcal{I}$ and are infected at time 0.  Assume the hypothesis for nodes with path length $\ell$.  Let $\ell+1$ be the length of the longest path from a vertex $u$ to $\mathcal{I}$.  Then the $k$ out-neighbors of $u$ have paths of length at most $\ell$ to the first $k$ vertices.  By induction, they are infected at time $\ell$, and so is $u$ at time $\ell + 1$.

Pick an arbitrary node $u$.  We will show that $u$ is infected in time $O(\log n)$ with probability $1 - 1/n^{c_3+1}$.  Then taking a union bound on all nodes, we will have our result. Note that if $u$ is in stage 0, then it will be infected in time $\log n$ because stage 0 has only $\log n$ nodes and the path back to the original $k$ vertices makes progress at each step, and thus takes time at most $\log n$.
Next, we let $u$ be in stage $i > 0$. We will bound the time $t$ it takes all paths starting at $u$ to get back to stage 0, and this will bound the time to infect $u$ by $t + \log(n)$. Next, we only need to show that $t\leq (c_2 - 1) \log n$ with probability at least $1-n^{-(c_3+1)}$.

\smallskip\noindent\textbf{Coupling the longest path with the branching process.}
We will create a coupling so that the longest path from $u$ to stage $0$ is bounded by the time it takes an appropriate labeled branching process to terminate.
Let $B(y)$ denote a $B(k, i, \alpha)$-labeled branching process rooted at node $y$ (ref. to Definition~\ref{def:branching-process}).
We consider the branching process $B(\hat{u})$ that is rooted at node $\hat{u}$ labeled $i$. Node $\hat{u}$ corresponds to the node $u$ in $G$ and because $u$ is in stage $i$, $\hat{u}$ is also labeled $i$. We use the same letter to show correspondence between the branching process and the graph nodes, while node letters in $B(u)$ will carry the $\ \hat{} \ $ hat! We reveal the nodes/edges using the $R_{BF}$ process. The BF ordering determines the random choices to be revealed next.

We will couple the $j$-th branch of $\hat{u}$ to the $j$-th neighbor of $u$ in $G$. 
If the $j$-th neighbor of $u$ is NOT in stage $i$, then we couple this to 
the $j$-th branch of $\hat{u}$ so that its label is $i-1$.
This coupling is truthful to the marginal probabilities because: 
\begin{inparaenum}[\itshape a\upshape)]
\item The probability that the $j$-th edge of $u$ is in stage $i$ (over the probability of the coin flips
$\{ \phi(z,u,j)|\forall z \in S_{i} \}$) is at most $1-\alpha$ according to the staging property;
\item and the probability that $\hat{u}$ has a branch of label $i$ is $1-\alpha$ in expectation.
\end{inparaenum}

Consider a fixed node $v$ in the graph; we explain how we find the corresponding
node $\hat{v}$ in the branching process.
We wait until all the oriented edge triples $(v, w, k)$ have been revealed.
When all these triples have been revealed, we know if $v$ has:
\begin{inparaenum}[\itshape a\upshape)]
\item No corresponding parent in the branching process tree;
\item Exactly one corresponding parent $\hat{p}$ in the branching process tree;
\item More than one parent in the branching process tree.
\end{inparaenum}

We treat these cases as follows:
\begin{inparaenum}[\itshape a\upshape)]
\item We don't couple the probabilities;
\item We correspond the child of $\hat{p}$ with $v$ and name it $\hat{v}$. We couple the events as we described above;
\item We know which parent is deeper in the branching process, we couple with this branch and ignore the rest.
\end{inparaenum}  The detailed coupling procedure maintains the invariant that the label of $\hat{v}$
is always greater than the stage of the corresponding $v$ in $G$.

Lemma~\ref{lemma:main} states that the $B(k, x, \epsilon)$-labeled branching process $B(\hat{u})$ dies out after 
$(c_2 - 1)\log n$
levels with probability at least $1-n^{(-c_3+1)}$. Hence, the length of the longest path from $u$ to initial nodes is also less than
$(c_2 - 1)\log n$  with probability at least $1-n^{-(c_3+1)}$.
\end{proof}

\begin{corollary} \label{thm:main}
Let $\mathcal{I}$ be the set of first $k$ arrived vertices in the ${\rm PA}_{p,m}(n)$ graph and let $k\leq m=O(1)$. A $k$-complex contagion ${\rm CC}(G, k, \mathcal{I})$ infects the entire ${\rm PA}_{p,m}(n)$ in $O(\log n)$ rounds with high probability.
\end{corollary}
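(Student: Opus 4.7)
The plan is to derive the corollary by combining Lemma~\ref{lem:PA-isStaged} (which proves ${\rm PA}_{p,m}(n)$ is staged) with Theorem~\ref{thm:fast-contagion-staged} (which converts a staging property into an $O(\log n)$ contagion bound). The one hitch is that Lemma~\ref{lem:PA-isStaged} only certifies $\alpha = 2/3$, which by the hypothesis $\alpha > 1 - 1/k$ of the theorem suffices only for $k \le 2$. Since the statement permits any constant $k \le m$, I would first strengthen Lemma~\ref{lem:PA-isStaged} to yield an $\alpha$ arbitrarily close to $1$, then invoke the $m \ge k$ version of Theorem~\ref{thm:fast-contagion-staged} noted in the remark preceding it.

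For the strengthened staging lemma, I would redefine stages using a finer geometric growth factor $1 + \epsilon$ in place of $3/2$: let $S_0$ absorb the first $O(\log n)$ arriving nodes, and for $i \ge 1$ set $S_i$ to be the arrivals indexed in $((1+\epsilon)^{i+c}, (1+\epsilon)^{i+1+c}]$ for a suitable constant shift $c$ depending on $\epsilon$. The two-case argument from Lemma~\ref{lem:PA-isStaged} then transports essentially verbatim, after using Corollary~\ref{cor:two-histories} to pass from the BF ordering back to the AT ordering: in the uniform branch the probability of landing inside the current stage is at most $|S_i|/(v-1) \le \epsilon/(1+\epsilon)$, and in the preferential branch the degree-sum comparison $2m(1+\epsilon)^{i+c}/(2m(v-1))$ yields the same bound. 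This establishes the $(R_{BF}, O(\log n), m, 1/(1+\epsilon))$-staging property, with $r = O(\log_{1+\epsilon} n) = O(\log n)$ stages for any constant $\epsilon > 0$.

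Since $k$ is a constant, I pick any constant $\epsilon$ with $0 < \epsilon < 1/(k-1)$, so that $\alpha := 1/(1+\epsilon) > (k-1)/k = 1 - 1/k$. The hypothesis $m \ge k$ holds by assumption, so the generalized form of Theorem~\ref{thm:fast-contagion-staged} applies and yields that the $k$-complex contagion seeded on the first $k$ arriving vertices infects the whole graph within $c_2 \log n$ rounds with probability at least $1 - n^{-c_3}$ for any desired constant $c_3$, i.e.\ with high probability.

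The main technical obstacle is the boundary behavior of the strengthened staging: for small $i$, the integer rounding of $(1+\epsilon)^i$ and the $-1$ in the denominator $v-1$ can spoil the clean inequality $\alpha \ge 1/(1+\epsilon)$. I would handle this by choosing the shift $c$ large enough, and padding $S_0$ accordingly (still $O(\log n)$ many vertices), so that for every $i \ge 1$ the approximations are tight enough to keep $\alpha > 1 - 1/k$ with room to spare; the resulting overhead from the extended $S_0$ is only an additive $O(\log n)$ on the final diffusion time, already absorbed into the $O(\log n)$ bound.
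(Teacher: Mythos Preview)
Your proposal is correct and in fact more careful than the paper itself. The paper states Corollary~\ref{thm:main} without proof, evidently intending it to follow by combining Lemma~\ref{lem:PA-isStaged} with Theorem~\ref{thm:fast-contagion-staged} via the $m \geq k$ remark. You are right that as literally stated, Lemma~\ref{lem:PA-isStaged} only delivers $\alpha = 2/3$, which meets the hypothesis $\alpha > 1 - 1/k$ only for $k \leq 2$; the corollary for general constant $k$ does not follow from the formal lemmas as written.

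Your fix---refining the stage ratio from $3/2$ to $1+\epsilon$ with $\epsilon < 1/(k-1)$, thereby pushing $\alpha$ to $1/(1+\epsilon) > 1 - 1/k$---is exactly the right repair, and it is the one the paper itself sketches informally in the proof overview (Section~\ref{sec:proof-overview}), where stages of size ratio $1+\epsilon$ are explicitly described and it is noted that the same-stage probability ``can be made arbitrarily small.'' So your approach is not different from the paper's; rather, you have written out the step the paper left implicit between its overview and its formal Lemma~\ref{lem:PA-isStaged}. Your handling of the boundary issues (absorbing small indices into an $O(\log n)$-sized $S_0$ via a constant shift) is also sound and only costs an additive $O(\log n)$, as you note.
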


\begin{remark}
It is noteworthy that the family of graphs ${\rm PA}_{p,m}(n)$
does not always generate a power-law graph. In fact, the ${\rm PA}_{0,m}(n)$ model
generates a heavily concentrated degree distribution with the largest degree being $O(\log n)$.
We emphasize that our results about the fast and complete spread
of complex contagions hold for all the members of this family regardless of them
having a power-law distribution or not\footnote{The $G(n,p)$ graph also
has a heavily concentrated distribution with largest degree being $O(\log n)$. However,
unlike ${\rm PA}_{0,m}(n)$,
deterministic choice of a constant number of initial seeds in the $G(n,p)$
would not cause complex contagions to spread~\cite{Janson12}.}.
\end{remark}

Using the same techniques, we prove the same result about the Copy model in Appendix~\ref{app:copy}.

\begin{theorem}\label{thm:main-average}
Let $\mathcal{I}$ be the set of first $k$ arrived vertices in the ${\rm CM}_{p,m}(n)$ graph and let $k\leq m=O(1)$. A $k$-complex contagion ${\rm CC}(G, k, \mathcal{I})$ infects the entire ${\rm CM}_{p,m}(n)$ in $O(\log n)$ rounds with high probability.
\end{theorem}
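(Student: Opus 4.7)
The plan is to mirror the proof of Corollary~\ref{thm:main}: (i) exhibit a revealing process under which ${\rm CM}_{p,m}(n)$ satisfies the staging property of Definition~\ref{def:staged} with parameter $\alpha > 1 - 1/k$, and then (ii) invoke Theorem~\ref{thm:fast-contagion-staged}. The principal obstacle, flagged explicitly at the opening of Subsection~\ref{sec:staging}, is that the $m$ edges issued by a single vertex in the copy model are dependent through their common prototype, so the edge-by-edge argument used in Lemma~\ref{lem:PA-isStaged} does not apply verbatim.

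First I would enrich the BF revealing process to a process $R'_{BF}$ that, at the moment the first oriented triple $(v{-}1,v,1)$ with shooter $v$ is processed, additionally reveals two pieces of node-local randomness: the uniform prototype $z_v \in \{1,\ldots,v-1\}$ and the independent indicators $\sigma_{v,1},\ldots,\sigma_{v,m} \in \{\text{copy}, \text{uniform}\}$, each $\sigma_{v,j}$ equal to copy with probability $p$. These are the ``random choices specific to the node itself'' in the sense of Definition~\ref{def:revealing}. A direct analogue of Lemma~\ref{lem:two-histories} shows that $p_{(u,v,j),L,AT} = p_{(u,v,j),L,R'_{BF}}$ for any graph $L$ missing the $j$-th edge of $v$: neither ordering exposes the degrees of nodes below $u$ nor the node info of vertices with index smaller than $v$, and the dangling edges revealed by the BF order carry no distributional information about $(u,v,j)$ given $L$.

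Next I would establish the $(R'_{BF}, O(\log n), m, \alpha)$-staging property. Take a geometric partition $S_i = \{v : (1+\eps)^i < v \leq (1+\eps)^{i+1}\}$ with $\eps > 0$ a sufficiently small constant (chosen below). Fix $v \in S_i$ and an edge index $j$, and set $W = S_i \cap \{1,\ldots,v-1\}$; the goal is to bound $p_{(W,v,j),L,R'_{BF}}$. Conditioning on $\sigma_{v,j}$: in the uniform case the computation of Lemma~\ref{lem:PA-isStaged} gives an endpoint essentially uniform on $\{1,\ldots,v-1\}$, hence in $W$ with probability $\leq |W|/(v-1) \leq \eps$; in the copy case the endpoint coincides with the $j$-th out-neighbor of $z_v$, which, if $z_v \notin S_i$, is automatically outside $W$ (since then $z_v <$ start of $S_i$), and if $z_v \in S_i$ reduces recursively to the same question for $z_v$. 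Setting $q_i := \sup_{v \in S_i,\, j,\, L} p_{(W,v,j),L,R'_{BF}}$, this gives
\[
q_i \;\leq\; (1-p)\eps \;+\; p\cdot \Pr[z_v \in S_i]\cdot q_i \;\leq\; (1-p)\eps + p\eps\, q_i,
\]
so $q_i \leq (1-p)\eps/(1-p\eps)$. Choosing $\eps$ small enough that this bound is strictly less than $1/k$ (e.g.\ $\eps \leq 1/(2k)$) yields $\alpha := 1-q_i > 1 - 1/k$, and the number of stages remains $O(\log n)$ because $\eps$ is a constant.

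Finally, with the staging property in hand, the coupling and labeled-branching-process argument of Theorem~\ref{thm:fast-contagion-staged} applies essentially unchanged. The one subtle point is that the $m$ edges of a vertex $v$ in CM are correlated through $z_v$, so the $m$ children of the corresponding branching-process node $\hat v$ inherit this correlation; I expect this to be the hardest thing to write up cleanly. However, the potential-function estimate $\mathbb{E}[\phi(t+1) \mid \phi(t)] \leq \delta\phi(t)$ in the proof of Lemma~\ref{lemma:main} uses only linearity of expectation across a node's children, never their independence, so the correlation is in fact harmless. Combined with the existing mechanism in the proof of Theorem~\ref{thm:fast-contagion-staged} for coupling only to the deepest branching-tree occurrence of any repeated graph vertex, this yields an $O(\log n)$ bound on the longest directed path from any vertex back to $\mathcal{I}$, and hence the claimed $O(\log n)$ spread time for ${\rm CC}({\rm CM}_{p,m}(n), k, \mathcal{I})$.
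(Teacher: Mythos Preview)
Your proposal lands on the same proof architecture as the paper: reveal the prototype $z_v$ and the copy/uniform indicators as node-local randomness in a BF-type revealing process, bound the chance that an outgoing edge stays inside the current geometric stage, and then couple to the labeled branching process, observing (correctly) that the potential-function step in Lemma~\ref{lemma:main} only needs linearity of expectation so the $z_v$-induced correlation among the $m$ edges is harmless. That last observation is exactly the crux of the paper's Appendix~\ref{app:copy} argument.

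Where your write-up diverges from the paper is in step (i)/(ii): you try to show that ${\rm CM}_{p,m}(n)$ satisfies the staging property of Definition~\ref{def:staged} and then invoke Theorem~\ref{thm:fast-contagion-staged} as a black box. The paper explicitly says at the start of Subsection~\ref{sec:staging} that CM does \emph{not} satisfy the staging property, and your own argument does not establish it either. The staging bound in Definition~\ref{def:staged}(iii) is a worst-case bound over every $L$ (equivalently, over every history of the revealing process, including the node-local randomness you just revealed). But your recursion
\[
q_i \;\leq\; (1-p)\eps + p\cdot \Pr[z_v\in S_i]\cdot q_i
\]
is obtained by \emph{averaging} over $\sigma_{v,j}$ and $z_v$, not by taking a supremum. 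Once $R'_{BF}$ has revealed $\sigma_{v,j}=\text{copy}$ and $z_v\in W$ with the $j$-th out-neighbor of $z_v$ lying in $W$ (all of which is visible in some $L$), the conditional probability $p_{(W,v,j),L,R'_{BF}}$ is $1$, so $q_i=1$ and the staging property fails. Consequently you cannot invoke Theorem~\ref{thm:fast-contagion-staged} directly.

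The fix is exactly what the paper does and what your final paragraph already sketches: skip the staging property, and instead reprove the coupling of Theorem~\ref{thm:fast-contagion-staged} directly for CM, using only the per-node bound ``conditioned on everything revealed before $z_x,\sigma_{x,\cdot}$ are exposed, the expected number of edges from $x$ into $W$ is at most $(1-\alpha)m$.'' That per-node expectation bound (for which the paper uses the simple estimate $\Pr[z_x\in W]\le 1-\alpha$ rather than your recursion, though your recursion is a valid refinement) is all the potential-function argument needs. So your plan should read: (i) establish the per-node expected-edge bound under BF revealing, then (ii) redo the coupling of Theorem~\ref{thm:fast-contagion-staged} with correlated children---not (i) staging property then (ii) black-box Theorem~\ref{thm:fast-contagion-staged}.
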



\section{Conclusions and Future Work}
We proved that complex contagions in a general family of time-evolving networks (that includes the PA model) are fast if the early arriving nodes (i.e., the oldies) are infected. Without infecting the oldies, complex contagions in the PA model starting from uniformly random initial seeds (even with a polynomial number of them) would stop prematurely. These results further emphasize the importance of crucial graph structures in enabling fast and widespread complex contagions~\cite{G08,Ghasemiesfeh:2013:CCW}. Our proof techniques could also be tailored to show fast complex contagions in the copy model (Appendix~\ref{app:copy}).

As future work, it would be interesting to explore complex contagions beyond the $k$-threshold model considered in this work. Despite our complexity result that seems to preclude an exact characterization of networks that spread complex contagions, it would be interesting to create a more unified framework characterizing graph structures crucial to analyzing the speed of complex contagions.

\bibliographystyle{abbrv}
\bibliography{complex-contagion-pref-att}

\appendix

\section{Fast Complex Contagions in the Copy Model} \label{app:copy}
Although the copy model does not satisfy the staging property (Definition~\ref{def:staged}), it barely misses it. That is why, a model specific tailored argument akin to the arguments of Lemma~\ref{lem:PA-isStaged} and Theorem~\ref{thm:fast-contagion-staged} can be used to prove the same result on fast spreading of complex contagions in the copy model.
\begin{restate}{Theorem~\ref{thm:main-average}}
Let $\mathcal{I}$ be the set of first $k$ arrived vertices in the ${\rm CM}_{p,m}(n)$ graph and let $k\leq m=O(1)$. A $k$-complex contagion ${\rm CC}(G, k, \mathcal{I})$ infects the entire ${\rm CM}_{p,m}(n)$ in $O(\log n)$ rounds with high probability.
\end{restate}
\begin{proof}
The structure of the proof is the same as Theorem~\ref{thm:fast-contagion-staged}.
We will create a coupling so that the longest path from $u$ to stage $0$ is 
bounded by the time it takes an appropriate labeled branching process to terminate.
Let $B(y)$ denote a $B(k, i, 2/3)$-labeled branching process rooted at node $y$.
We consider the branching process $B(\hat{u})$ that is rooted at node 
$\hat{u}$ labeled $i$. Node $\hat{u}$ corresponds to the node $u$ in 
$G$ and because $u$ is in stage $i$, $\hat{u}$ is also labeled $i$.
Stage $S_{0}$ contains the first two nodes and for each
$i$, $S_{i}=\{v_{s}|(3/2)^{i}< s \leq (3/2)^{i+1}\}$.

We will couple the branches of $\hat{u}$ to the neighbors of $u$ in $G$.
Let $t_{0}(u)$ be first time an edge of the node $u$ is visited in the $BF$ order.
At time $t_{0}(u)$, the filter $R_{BF}$ reveals the the prototype node $z_u$ and
how many edges of $u$ are copied from the prototype and how many are chosen
uniformly randomly. The random choices specific to the node $u$ might dictate the outcome of
all or some of its random edges.
We will couple the children of $\hat{u}$ to the neighbors of $u$ following the dictated
pattern of node $u$'s specific random choices. For example, in the copy model
 if the prototype node $z_u \notin W$, we consider all its prototype outgoing edges 
 as being outside $W$.

If a neighbor of $u$ lands in stage $i-1$, the label of the corresponding child of $\hat{u}$ 
will also be $i-1$. Otherwise, the label of the child would be $i$ as its parent. 
The correspondence between the nodes of the branching process and $G$ is made later.

We handle the appearance of multiple candidates to be coupled with a node $v$
the same way that the proof of Theorem~\ref{thm:fast-contagion-staged} handles it:
We simply couple children of $v$ with the deepest candidate $\hat{v}$ in $B(u)$.

We need to show that the coupling is truthful to the marginal probabilities.
Let $x$ be an arbitrary node in the graph with stage $j$ with a corresponding node $\hat{x}$ in the branching process. 
Firstly, we know that $\hat{x}$ creates at most $m/3$ children labeled $j$ in expectation.
Let $L$ be a graph generated from ${\rm CM}_{p,m}(n)$ that is consistent
with the revealed information in $R_{BF}(\mathcal{G})$ up until time $t_0(x)$.
 
In the revealed filter, there is information about
\begin{inparaenum}
\item the prototype nodes of all the nodes with indices bigger than $x$;
\item edges landing on nodes after $x$ (with bigger indices);
\item the number of edges that go from nodes with bigger
indices than $x$ to nodes with smaller indices that $x$.
\end{inparaenum}

Let $W$ be the set of nodes in stage $j$ before $x$ in the arrival order.
We prove that conditioned on the information in $R_{BF}(\mathcal{G})$
up until time $t_0(x)$, $x$ creates at most $m/3$ edges to $W$.

Node $x$ chooses a uniformly random prototype $z_x$. Then with probability $p$, node $x$ chooses the $\ell$-th 
outgoing edge of the prototype node $z_x$ as its edge; and with probability $1-p$ 
it chooses its $\ell$-th outgoing edge uniformly at random. 
The revealed information in the filter about the edges and prototypes of other nodes
does not affect any of the random choices of $x$.
Furthermore, when triples of the type $(u,x,j)$ are processed, the prototype of $v_x$
is already revealed and cannot be changed.
Node $z_x$ would be outside $W$ with probability at least
$1/2$. Furthermore, since \emph{outgoing} edges of $z_x$ appear before $z_x$ in the arrival order,
the copied edge from the prototype node are outside $W$ with probability $\geq 2/3$.
If an edge was chosen uniformly at random, it will be outside $W$ with probability at least $2/3$.
Hence we have conditioned on the information in $R_{BF}(\mathcal{G})$
up until time $t_0(x)$, $x$ creates at most $m/3$ edges to $W$.

The above argument shows that the coupling is truthful to the marginal probabilities.
The explained coupling procedure maintains the invariant that the label of $\hat{v}$
is always greater than the stage of the corresponding vertex $v$ in $G$.
Using Lemma~\ref{lemma:main}, we conclude that the length of the longest path from $u$
to initial nodes is also less than
$(c_2 - 1)\log n$  with probability at least $1-n^{-(c_3+1)}$ for constants $c_2,c_3$ depending
on $k$.
Hence the speed of a $k$-complex contagion
is $O(\log n)$ with high probability.
\end{proof}
\section{Bootstrap Percolation in the Preferential Attachment Model} \label{sec:bootstrap}
In this section, we focus on bootstrap percolation in the Preferential Attachment model (Definition~\ref{def:PA}). In other terms, we analyze complex contagions when the initial seeds are chosen uniformly at random.  First, we show that there exists a polynomial threshold $f(n)$ such that if $o(f(n))$ initial seeds are chosen uniformly at random, the contagion almost surely does \emph{not} spread. Second, we show that if $\Omega(f(n)\log n)$ initial seeds are infected, the whole
graph gets infected with high probability in $O(\log n)$ rounds. This shows that the first few nodes in the arriving order of the network are critical in their roles of enabling a complex contagion.

\subsection{No New Infections}
First, we show that choosing initial seeds randomly in the PA graph is a 
pretty inefficient way of initiating a complex contagion. The following theorem shows that
until the size of randomly chosen initial seeds is a polynomial in the size of the graph, the
contagion almost surely does not spread to any other node.
\begin{theorem}\label{thm:bootstrap}
Consider the ${\rm PA}_{p,m}(n)$ graph. A $k$-complex contagion ${\rm CC}({\rm PA}_{p,m}(n), k, S)$ would not spread to other nodes with probability $1-o(1)$, if we choose $S$ as follows. 
\benum
\item If $k \geq 2/p$, $S= \{o \left(n^{1-p/2} \right) \text{ random initial seeds} \}$;
\item If $k<2/p$, $S= \{o \left( n^{1-1/k} \right)\text{ random initial seeds} \}$.
\eenum
\end{theorem}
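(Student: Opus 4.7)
The plan is a first moment calculation on the number of vertices newly infected during round one.  Since the infected set is monotone increasing, if no vertex outside $S$ is infected in round one, then the infected set stays equal to $S$ at every subsequent round and the contagion never spreads; so it suffices to bound the expected number of new infections in round one and apply Markov's inequality.  Fix a realization of the random graph $G$ and a vertex $v\notin S$.  Because $S$ is drawn uniformly at random from the vertex set \emph{independently of the graph}, the number of neighbors of $v$ lying in $S$ is hypergeometric, so
\[
\Pr\bigl[v \text{ newly infected}\mid G\bigr] \;\leq\; \binom{d(v)}{k}\!\left(\tfrac{|S|}{n}\right)^{\!k} \;\leq\; \frac{d(v)^{k}}{k!}\!\left(\tfrac{|S|}{n}\right)^{\!k}.
\]
Summing over $v$ and taking expectation over the graph, the expected number of newly infected vertices is at most $(|S|/n)^{k}/k!\cdot \mathbb{E}\bigl[\sum_{v} d(v)^{k}\bigr]$.

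The crux is to estimate $\mathbb{E}[\sum_v d(v)^k]$ for ${\rm PA}_{p,m}(n)$.  For $p\in(0,1]$ the independent PA model is equivalent to an affine preferential attachment with shift $a=2m(1-p)/p$, hence its degree distribution follows a power law with exponent $\gamma = 3+a/m = (2+p)/p = 1+2/p$ and maximum degree $\Theta\!\bigl(n^{1/(\gamma-1)}\bigr) = \Theta(n^{p/2})$ whp.  Thus $\mathbb{E}\bigl[\#\{v:d(v)\geq d\}\bigr] = O(n/d^{\gamma-1})$, and integrating
\[
\mathbb{E}\!\left[\sum_v d(v)^{k}\right] \;\leq\; \int_m^{n^{p/2}} kx^{k-1}\,O(n/x^{\gamma-1})\, dx
\]
gives the two regimes
\[
\mathbb{E}\!\left[\sum_v d(v)^{k}\right] \;=\; \begin{cases} O(n), & k < 2/p,\\[2pt] O(n^{pk/2}), & k \geq 2/p.\end{cases}
\]
For $p=0$ there is no power law, but an elementary calculation shows $\mathbb{E}[d(v_t)^k] = O\!\bigl((\log(n/t))^k\bigr)$, so $\mathbb{E}[\sum_v d(v)^k] = O(n)$; here $2/p=\infty$ and only the first regime is relevant.

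Substituting back, the expected number of newly infected vertices is $O(|S|^{k}/n^{k-1})$ when $k<2/p$ and $O\bigl((|S|/n^{1-p/2})^{k}\bigr)$ when $k\geq 2/p$; the stated seed-size bounds make this expectation $o(1)$ in both regimes, and Markov's inequality finishes the argument.  The main technical obstacle is justifying the empirical tail bound $\mathbb{E}\bigl[\#\{v:d(v)\geq d\}\bigr] = O(n/d^{\gamma-1})$ for the \emph{independent} variant of PA (and its sequential/conditional cousins); for $p=1$ this is the Bollob\'as--Riordan--Spencer--Tusn\'ady analysis of the degree sequence, and for $p\in(0,1)$ it follows by reducing to affine PA via the shift $a = 2m(1-p)/p$.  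A minor care is needed at the borderline $k=2/p$, where the moment integral picks up a logarithm that is absorbed into the $o(\cdot)$ slack; the sequential and conditional PA models are handled identically since they share the same exponent $\gamma$.
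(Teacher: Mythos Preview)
Your plan coincides with the paper's at the architectural level: a first-moment bound on the number of round-one infections, fed by the power-law degree sequence with exponent $\gamma = 1 + 2/p$. The paper derives $E[N_x] = O(n x^{-\gamma})$ from scratch via the master-equation recursion (their Lemmas bounding $\eta_x$); you invoke the affine-PA reduction with shift $a = 2m(1-p)/p$, which is equivalent.

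The genuine gap is in the regime $k > 2/p$. Granting the tail bound $E[\#\{v : d(v) \geq d\}] = O(n/d^{2/p})$ for all $d$, the layer-cake integral $\int_m^{\infty} k x^{k-1}\cdot n x^{-2/p}\,dx$ \emph{diverges}. Truncating at $n^{p/2}$ on the grounds that $d_{\max} = \Theta(n^{p/2})$ w.h.p.\ is not legitimate here: you are bounding an \emph{expectation}, and $E[\#\{v : d(v) \geq d\}]$ is strictly positive far beyond $n^{p/2}$; on the rare event that an early vertex has anomalously large degree its contribution to $\sum_v d(v)^k$ is enormous, and the degree-sequence bound alone does not decay fast enough past $n^{p/2}$ to kill that tail. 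Your asserted estimate $E[\sum_v d(v)^k] = O(n^{pk/2})$ is in fact correct, but proving it requires either vertex-wise moment bounds $E[d(v_t)^k] = O((n/t)^{pk/2})$ or an explicit super-polynomial tail on $d_{\max}$, neither of which follows from the expected degree sequence.

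The paper sidesteps this entirely by retaining the trivial cap $\Pr[\nu_v(S) \geq k \mid d(v) = x] \leq 1$ and splitting the degree sum at $x = n/(2s)$: below the split it uses the bound $2(xs/n)^k E[N_x]$, above it uses just $E[N_x]$, and both pieces converge from $E[N_x] = O(n x^{-1-2/p})$ with no appeal to $d_{\max}$ whatsoever. That split, rather than a max-degree truncation, is the missing step in your outline. (A smaller point: at the borderline $k = 2/p$ the logarithm is not ``absorbed into the $o(\cdot)$ slack'' --- writing $s = \epsilon(n)\,n^{1-p/2}$, the bound becomes $\epsilon(n)^k \log n$, which is not $o(1)$ when $\epsilon(n)$ decays more slowly than $(\log n)^{-1/k}$.)
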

\begin{proof}
Assume that the network edges are undirected and let $s=|S|$. Denote by $X$ the number of infected nodes in the first round. $X$ is the number of nodes that have at least $k$ neighbors in $S$. We will show below that the expectation of $X$ is $o(1)$. By Markov's inequality, the number of infected nodes will be zero
with probability $1-o(1)$.

Let $d_i$ and $\nu_{i}(S)$ denote the degree of the $i$-th node, and the number of neighbors of node $i$ in set $S$ respectively. The expectation of $X$ can be written as
\begin{align*}
E[X]=\sum^{n}_{i=1} \Prob{\nu_{i}(S)\geq k} 
=\sum^{n}_{i=1} \sum^{mn}_{x=k}  \Prob{\nu_{i}(S) \geq k |  d_i=x} \Prob{ d_i=x}.
\end{align*}
In the proof of Lemma~\ref{lem:applyS}, we show that $$ \Prob{\nu_{i}(S) \geq k |  d_i=x}\leq W=\Min{ \left(\frac{xs}{n} \right)^{k}\left(\frac{1}{1-xs/n} \right) ,1}$$ Take $E[N_x]$ as the expected number of nodes of degree $x$ in the PA graph of $n$ vertices,
\begin{align*}
E[X]&\leq \sum^{n}_{i=1} \sum^{mn}_{x=k} W \cdot \Prob{ d_i=x}  \leq \sum^{mn}_{x=k} W\cdot  E[N_x].
\end{align*}
Thus, a critical step in the proof is to upper bound $E[N_x]$. We utilize the \emph{master equation} method~\cite{DorogovtsevMeSa00} to perform this computation.
However, instead of directly solving the recurrence as is done
for the case of $p=1$ for the \emph{sequential} PA model
in \cite{durrett06} and for the \emph{conditioned} PA model in \cite{Hagberg06convergence},
we upper bound it for all values of $0\leq p\leq 1$ in Lemmas~\ref{lem:tight-mean} and \ref{lem:bounding-eta}.

 Let $N_t(x)$ denotes the number of nodes with degree $x$ in the graph of $t$ vertices and denote by $n_t(x)=E[N_t(x)]$. 
The following recurrence holds for the ${\rm PA}_{p,m}$ model:
\begin{align}\label{eq:recursion-expection}
E[N_{t+1}(x)|N_t(x)]=\left(1-\frac{a_x}{t} \right)n_t(x) +\frac{a_{x-1}}{t} n_t (x-1) +c_x.
\end{align}
in which $a_x$ and $c_x$ are non-negative values that depend on the specific model and $a_{x+1}\geq a_x$.

In the ${\rm PA}_{p,m}$ model, each node issues $m$ edges to existing nodes. With probability $p$, each edge connects to a node with preferential attachment rule and with probability $1-p$, an edge connects to a uniformly random chosen node.
\begin{align*}
a_x =\frac{px}{2}+m(1-p), \qquad  c_x =\delta_{(m+1)x}= 
\begin{cases} 1 & x=m+1 \\
0 & x\neq m	\end{cases}.
\end{align*} 
We ignore the possibility of more than one edge being attached to one vertex and the self-loops.
We present the rest of the proof in the following four lemmas.

\begin{lemma}\label{lem:tight-mean}
Let $N_x$ be the number of nodes of degree $x$ in the ${\rm PA_{p,m}(n)}$ model. We have that  $E[N_n(x)]\leq m n \eta_x$, where $\eta_x=\frac{a_{x-1}}{1+a_x} \eta_{x-1} +\frac{c_x}{1+a_x}$.
\end{lemma}

\begin{proof}
We prove the claim by induction on $t$, the number of nodes in the graph.
In the base case $N_0(k)=0$ for all $x$, so the claim is trivially true. 
Suppose that the claim is true for $t$, i.e., $n_t(x)\leq m t \eta_x$. And $\eta_{x-1}=(1+a_x)\eta_x/a_{x-1} -c_x/a_{x-1}$. By the recurrence we have
\[\begin{array}{ll}
n_{t+1}(x) & \leq \left(1-\frac{a_x}{t} \right)mt\eta_x +\frac{a_{x-1}}{t} mt\eta_{x-1} +c_x\\
& \leq \left(1-\frac{a_x}{t} \right)mt\eta_x +a_{x-1}m\left( (1+a_x)\eta_x/a_{x-1} -c_x/a_{x-1} \right)+c_x\\
& = m(t+1)\eta_x -(m-1)c_x\\
& \leq m(t+1)\eta_x
\end{array},
\]
which proves the statement.
\end{proof}

\begin{lemma}\label{lem:bounding-eta}
In the ${\rm PA_{p,m}(n)}$ model, we have $\eta_x=\Theta \left (x^{-(1+2/p)} \right)$ for all $0 \leq p \leq 1$.
\end{lemma}

\begin{proof}
The statement for $p=1$ is proved in \cite{Hagberg06convergence}. We follow a similar
strategy to prove it for all the values of $0 \leq p<1$.
From the recursive definition of $\eta_x$, we can write:
\begin{align*}
\eta_x=\sum^{x}_{j=1} \frac{c_j}{1+a_j} \prod^{x}_{i=j+1} \frac{a_{i-1}}{1+a_i}
\end{align*}
However, $c_{j}=0$ for all $j>m+2$. Hence for $x \geq m+2$ we can write:
\begin{align*}
\eta_x =\eta_{m+2} \prod^{x}_{j=m+3} \frac{a_{j-1}}{1+a_j}  \nonumber 
=\eta_{m+2}\prod^{x}_{j=m+3} \frac{p(j-1)/2+m(1-p)}{1+pj/2+m(1-p)}
\end{align*}
Define $\alpha_{p}=\frac{m-mp-p/2}{p/2}$ and $\beta_p=\frac{m-mp+1}{p/2}$
and notice that for $p<1$, $-1 < \alpha_p< \beta_p$. We have:
\begin{align*}
\log(x) &= \log(\eta_{m+2})+ \sum^{x}_{j=m+3} \log \left( \frac{p(j-1)}{2}+m(1-p) \right) - \log \left( 1+\frac{pj}{2}+m(1-p) \right) \\
&=  \log(\eta_{m+2})+ \sum^{x}_{j=m+3} \log \left( 1+\frac{\alpha_p}{j} \right) - \log \left( 1+\frac{\beta_p}{j} \right)
\end{align*}
$f(x)=\log (1+x)$ is a continuous function. So by the \emph{mean value theorem} we have:
\begin{align*}
\forall j, \quad \exists \psi_j \quad \alpha_j/j<\psi_{j}<\beta_j/j, \quad f'(\psi_{j})=\frac{f(\beta_j)-f(\alpha_j)}{\beta_j -\alpha_j}
\end{align*}
Hence we get:
\begin{align*}
\log(x) = \log(\eta_{m+2})+ \sum^{x}_{j=m+3} \left ( \frac{\beta_j -\alpha_j}{j} \right) \frac{1}{1+\psi_{j}} 
=\log(\eta_{m+2})- \frac{2+p}{p} \sum^{x}_{j=m+3}  \frac{1}{j(1+\psi_{j})}
\end{align*}
Furthermore, we have that 
$$\sum^{x}_{j=m+3}  \frac{1}{j+\beta_p} \leq \sum^{x}_{j=m+3}  \frac{1}{j(1+\psi_{j})} \leq \sum^{x}_{j=m+3}  \frac{1}{j+\alpha_p};$$
which means that $\eta_x=\Theta \left(x^{-(1+2/p)} \right)$.
\end{proof}

\begin{lemma}\label{lem:bound-expected-infections}
Let $S$ be chosen as stated in Theorem~\ref{thm:bootstrap} and $X$ be the number of infected nodes in the first round 
of ${\rm CC}({\rm PA}_{p,m}(n), k, S)$. We have that, $E[X]=O \left( \frac{s^k}{n^{k-1}}\sum^{n/2s}_{x=k} x^{k-1-2/p} +n\sum^{mn}_{x=n/2s+1} 1/x^{1+2/p} \right)$.
\end{lemma}
\begin{proof}
Let $d_i$ and $\nu_{i}(S)$ denote the degree of the $i$-th node, and the number of neighbors of node $i$ in set $S$ respectively.
We have:
$$
E[X]=\sum^{n}_{i=1} \Prob{\nu_{i}(S)\geq k} 
=\sum^{n}_{i=1} \sum^{mn}_{x=k}  \Prob{\nu_{i}(S) \geq k |  d_i=x} \Prob{ d_i=x}.
$$
We can rewrite $\Prob{\nu_{i}(S)\geq k |d_i=x}$ as:
\begin{align*} 
\Prob{\nu_{i}(S)\geq k |  d_i=x} &= \Min{ \sum^{x}_{j=k} \Prob{\nu_{i}(S)=j |  d_i=x},1} \\
&\leq \Min{ \sum^{x}_{j=k} x^j\left(\frac{s}{n} \right)^{j} ,1}\\
&\leq \Min{ \left(\frac{xs}{n} \right)^{k}\left(\frac{1}{1-xs/n} \right) ,1} \qquad \text{if  $\frac{xs}{n}<1$}.
\end{align*}
We claim that if $\frac{xs}{n}<1/2$, then 
$$\left(\frac{xs}{n} \right)^{k}\left(\frac{1}{1-xs/n} \right)<1 \qquad \text{since $\left(\frac{1}{1-xs/n} \right) <2$ and $k\geq 2$}.$$
Now we proceed to compute an upper bound for $E[X]$:
\begin{align*}
E[X]&=\sum^{n}_{i=1} \sum^{mn}_{x=k} \Prob{\nu_{i}(S) \geq k |  d_i=x} \Prob{ d_i=x} \\
&\leq \sum^{n}_{i=1} \sum^{mn}_{x=k} \Min{ \left(\frac{xs}{n} \right)^{k}\left(\frac{1}{1-xs/n} \right) ,1}  \Prob{ d_i=x} \\
&\leq \sum^{mn}_{x=k} \Min{ \left(\frac{xs}{n} \right)^{k}\left(\frac{1}{1-xs/n} \right) ,1} \sum^{n}_{i=1}   \Prob{ d_i=x} \\
&\leq \sum^{mn}_{x=k} \Min{ \left(\frac{xs}{n} \right)^{k}\left(\frac{1}{1-xs/n} \right) ,1} E[N_x].
\end{align*}
Now we cut off the summation at $xs/n=1/2$. Although this cut-off is not sharp, since
we are bounding the expectation from above it is ok.
\begin{align*}
E[X]&\leq \sum^{mn}_{x=k} \Min{ \left(\frac{xs}{n} \right)^{k}\left(\frac{1}{1-xs/n} \right) ,1} E[N_x] \\
&\leq \sum^{n/2s}_{x=k} \left(\frac{xs}{n} \right)^{k}\left(\frac{1}{1-xs/n} \right) E[N_x] +\sum^{mn}_{x=n/2s+1} E[N_x] \\
&\leq \sum^{n/2s}_{x=k} 2 \left(\frac{xs}{n} \right)^{k} E[N_x] +\sum^{mn}_{x=n/2s+1} E[N_x]  \qquad \text{since $\left(\frac{1}{1-xs/n} \right) <2$ in the first sum}, \\
&\leq \sum^{n/2s}_{x=k} 2 \left(\frac{xs}{n} \right)^{k} mn\eta_x + \sum^{mn}_{x=n/2s+1} mn\eta_x
\qquad \text{Using Lemma~\ref{lem:tight-mean},} \\
&\leq \sum^{n/2s}_{x=k} 2 \left(\frac{xs}{n} \right)^{k} mn \, \Theta \left (x^{-(1+2/p)} \right) + \sum^{mn}_{x=n/2s+1} mn \, \Theta \left (x^{-(1+2/p)} \right)
\qquad \text{Using Lemma~\ref{lem:bounding-eta},} \\
&=O \left( \frac{s^k}{n^{k-1}}\sum^{n/2s}_{x=k} x^{k-1-2/p} +n\sum^{mn}_{x=n/2s+1} x^{-1-2/p} \right)   
\end{align*}
\end{proof}

\begin{lemma}\label{lem:applyS}
Let $S$ be chosen as stated in Theorem~\ref{thm:bootstrap}, and $X$ be the number of infected nodes in the first round 
of ${\rm CC}({\rm PA}_{p,m}(n), k, S)$. We have that $E[X]=o(1)$.
\end{lemma}
\begin{proof}
We just need to do case analysis on $E[X]$ based on Lemma~\ref{lem:bound-expected-infections}:
\bitem
\item If $k> 2/p$, then we have $E[X]=O \left( \frac{s^k}{n^{k-1}} \left( \frac{n}{2s}\right)^{k-2/p} + n \left(\frac{n}{2s}+1 \right)^{-2/p} \right)$. Thus $E[X]=O \left( \frac{s^{2/p}}{n^{2/p-1}} \right)$.

If $s=o \left(n^{1-p/2} \right)$, we get $E[X]=o(1)$.

\item If $k=2/p$, we have $E[X]=O \left( \frac{s^k}{n^{k-1}} \log (n/2s) + n \left(\frac{n}{2s}+1 \right)^{-2/p} \right)$,
which solves to $E[X]=O \left( \frac{s^{2/p}\log (n/2s)}{n^{2/p-1}} \right)$.

If $s=o \left(n^{1-p/2} (\log n)^{-p/2} \right)$, we get that $E[X]=o(1)$. However if $n^{1-p/2} (\log n)^{-p/2}< s =o \left(n^{1-p/2} \right)$, the $\log (n/2s)$ term in $E[X]$ would be a constant and $E[X]=o(1)$.

\item And if $k<2/p$, we have $E[X]=O \left( \frac{s^k}{n^{k-1}}  +  \frac{s^{2/p}}{n^{2/p-1}} \right) $
that solves to $E[X]=O \left( \frac{s^k}{n^{k-1}} \right).$
If $s=o \left( n^{1-1/k} \right)$, we get that $E[X]=o(1)$ again.
\eitem
\end{proof}

Applying Markov inequality on the statement of Lemma~\ref{lem:applyS} proves the statement of Theorem~\ref{thm:bootstrap}.
\end{proof}

\subsection{Oldies But Goodies}
We utilize the expected degree of early nodes in the PA model to show that
they become infected with high probability once enough random seeds are infected at round $0$.
Once all the first $k$ nodes in the graph are infected, the $k$-complex contagion will spread
to the rest of the graph and it spreads quickly. Once again, this emphasizes
the role of early nodes in the PA model.

A computation of the expected degree of nodes for $p=1$ and $m=1$ is presented in
\cite{BollobasRiSp01}. We follow their approach and prove
the expected degree for all values of $0 \leq p \leq 1, m \geq 2$ in the following Lemma.
We will work with the independent model here, but the other two variations
are similar.
\begin{lemma} \label{lem:degree-early-nodes}
Let $d_{t}(s)$ denote the degree of node $s$ in a ${\rm PA}_{p,m}$ at time $t$.
We have $E[d_{n}(s)] =\Theta \left ( \left(n/s \right)^{p/2} \right)
$.
\end{lemma}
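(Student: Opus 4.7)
The plan is to analyze $a_t := \E[d_t(s)]$ via a linear recurrence driven by the preferential/uniform arrival rule. When node $t+1$ arrives at time $t+1$, each of its $m$ outgoing edges independently lands on $s$ with probability
\[
\frac{p\, d_t(s)}{2mt} + \frac{1-p}{t},
\]
because the total degree of $G_t$ is $2mt$. Summing over the $m$ edges, conditioning on $d_t(s)$ and taking expectations yields
\[
a_{t+1} = a_t\!\left(1 + \frac{p}{2t}\right) + \frac{m(1-p)}{t}, \qquad t \geq s, \qquad a_s = m.
\]

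Next, I would solve the homogeneous part by setting $b_t := \prod_{i=s}^{t-1}\!\bigl(1 + p/(2i)\bigr)$. Using $\log(1+x) = x - \Theta(x^2)$ and $\sum_{i=s}^{t-1} 1/i = \log(t/s) + O(1/s)$, one gets $\log b_t = (p/2)\log(t/s) + O(1/s)$, hence $b_t = \Theta\!\bigl((t/s)^{p/2}\bigr)$ with multiplicative constants independent of $s$ and $t$. I would then apply variation of parameters: write $a_t = b_t c_t$ and substitute into the recurrence to obtain the simple telescoping relation $c_{t+1} - c_t = m(1-p)/(t\, b_{t+1})$ with $c_s = m$. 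Summing,
\[
c_n = m + m(1-p)\sum_{i=s}^{n-1} \frac{1}{i\, b_{i+1}} = m + \Theta\!\left(s^{p/2}\sum_{i=s}^{n-1} i^{-(1+p/2)}\right) = \Theta(1),
\]
where the last equality uses $p > 0$ so that the tail sum $\sum_{i \geq s} i^{-(1+p/2)} = \Theta(s^{-p/2})$ cancels the $s^{p/2}$ factor. Therefore $a_n = b_n \, c_n = \Theta\!\bigl((n/s)^{p/2}\bigr)$, as desired.

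I expect the main obstacle to be making the estimate $b_t = \Theta\!\bigl((t/s)^{p/2}\bigr)$ rigorous uniformly in $s$, since when $s$ is small the $O(x^2)$ correction terms in $\log(1+x)$ are the first ones at risk of blowing up. I would control this by showing $\sum_{i=s}^{t-1} \Theta(p^2/i^2) = O(1/s) = O(1)$, which pins the multiplicative constant independently of $s$, giving matching upper and lower bounds on $b_t$ and completing the $\Theta$-estimate. A minor secondary obstacle is that the initial few steps $t = s, s+1, \dots$ involve small denominators; these contribute only a bounded multiplicative factor and can be absorbed into the constants. This matches the $p=1, m=1$ computation of Bollob\'as, Riordan, and Spencer and recovers the classical $\sqrt{n/s}$ behavior as a special case, while covering the entire range $0 < p \leq 1$ and $m \geq 2$.
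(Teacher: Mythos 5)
Your proposal is correct and takes essentially the same route as the paper: both derive the recurrence $a_{t+1}=a_t\left(1+\frac{p}{2t}\right)+\frac{m(1-p)}{t}$ with $a_s=m$ and solve it exactly, the paper by summing the explicit solution and estimating the resulting products via Gamma functions and Stirling's formula, you by variation of parameters and a logarithmic expansion of the product $b_t$. If anything your bookkeeping is more careful at the endpoints: you retain the homogeneous contribution $c_s=m$, which is what carries the entire answer at $p=1$ (the paper's displayed final expression $\Theta\left(m^2(1-p)(n/s)^{p/2}\right)$ degenerates there), and you explicitly flag that the $\Theta(1)$ bound on $c_n$ requires $p>0$ --- at $p=0$ the inhomogeneous sum is $\Theta(\log(n/s))$, a restriction the paper's own computation makes silently.
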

\begin{proof}
We start by writing a recursive relation based on the edge probabilities.
\begin{align*}
E[d_{t}(s)|d_{t-1}(s)]&=d_{t-1}(s)+ pm\frac{d_{t-1}(s)}{2m(t-1)} +(1-p)m\frac{1}{t-1} \\
E[d_{t}(s)]&=\frac{2t-2+p}{2t-2}E[d_{t-1}(s)] +\frac{m(1-p)}{t-1}
\end{align*}
Starting with $d_s(s)=m$, we get:
\begin{align*}
E[d_{n}(s)]&= \sum^{n}_{j=s} \frac{m(1-p)}{j-1} \prod^{n}_{i=j+1} \frac{t-1+p/2}{t-1} \\
&=\sum^{n}_{j=s} \frac{m(1-p)}{j-1} d_s(s)\prod^{n}_{i=j+1} \frac{t-1+p/2}{t-1} \\
&= \sum^{n}_{j=s} \frac{m^2(1-p)}{j-1} \frac{\Gamma(n-1+p/2)}{\Gamma(n-1)} \frac{\Gamma(j)}{\Gamma(j+p/2)}    \\
&=  m^2(1-p)\sum^{n}_{j=s} \frac{1}{j-1} \left(\frac{n}{j} \right)^{p/2} \left(1+O \left (\frac{1}{j} \right) \right)  
\qquad \text{using Stirling's formula for $\Gamma(.)$}; \\
&= \Theta \left(  m^{2}(1-p)n^{p/2}\sum^{n}_{j=s} \frac{1}{j^{p/2}(j-1)} \right) \\
&= \Theta \left(  m^{2}(1-p)\left(\frac{n}{s} \right)^{p/2}  \right)
\end{align*}
\end{proof}

\begin{theorem}
If we choose $\mathcal{I}= \{ \Omega \left(n^{1-p/2} \log n \right) \text{ random initial seeds}\}$, then a $k$-complex contagion ${\rm CC}({\rm PA}_{p,m}(n), k, \mathcal{I})$ spreads to all the nodes with high probability in $O(\log n)$ rounds.
\end{theorem}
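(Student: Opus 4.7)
The plan is to show that each of the first $k$ arriving vertices becomes infected at the end of round $1$ with high probability, after which Corollary~\ref{thm:main} immediately furnishes the rest of the argument: once all of $\{1,\dots,k\}$ are infected, the $k$-complex contagion covers the remainder of ${\rm PA}_{p,m}(n)$ in $O(\log n)$ additional rounds w.h.p. Adding the single ignition round gives the claimed $O(\log n)$ bound. Thus the entire burden is to force the ``oldies'' to have at least $k$ neighbors inside the random seed set $\mathcal{I}$.

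Fix one of the first $k$ vertices, call it $s$. Lemma~\ref{lem:degree-early-nodes} gives $\mathbb{E}[d_n(s)] = \Theta(n^{p/2})$. The first step will be to upgrade this to a high-probability lower bound $d_n(s) \geq (1/2)\,\mathbb{E}[d_n(s)]$, which I would do via a standard Doob-martingale argument for PA: reveal the endpoint choices of the arriving vertices $v = m{+}1, \dots, n$ in arrival order, and let $M_v$ be the conditional expectation of $d_n(s)$ given the history through time $v$. Each arrival changes $d_n(s)$ by at most $m$, so Azuma--Hoeffding produces tails of order $\exp(-\Omega(t^2/n))$, and the choice $t = \Theta(n^{p/2})$ yields probability $1 - o(1/n)$ whenever $p > 0$.

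Condition now on the event $d_n(s) \geq c\, n^{p/2}$. Since $\mathcal{I}$ is a uniformly random subset of size $\Omega(n^{1-p/2}\log n)$ drawn from the $n$ vertices, the number $N_s$ of neighbors of $s$ that fall into $\mathcal{I}$ is hypergeometric with mean at least $c\, n^{p/2} \cdot |\mathcal{I}|/n = \Omega(\log n)$. Choosing the hidden constant in the bound on $|\mathcal{I}|$ sufficiently large, a Chernoff-type tail bound for the hypergeometric distribution gives $\Pr[N_s \geq k] \geq 1 - n^{-c'}$ for any desired constant $c'$. A union bound over the $k = O(1)$ early nodes and over the degree concentration events then ensures that with high probability every vertex in $\{1,\dots,k\}$ sees at least $k$ infected neighbors at round $0$, and is therefore infected at round $1$. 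Applying Corollary~\ref{thm:main} to the graph with seeds $\{1,\dots,k\}$ completes the argument.

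The main obstacle is the concentration step: Lemma~\ref{lem:degree-early-nodes} only provides the mean, and the PA process does not factor as a product measure, so the martingale increments must be controlled carefully. However, revealing arrivals in their natural order makes each increment bounded by $m$ and the Azuma--Hoeffding bound sufficient, as is standard in the PA literature. Everything downstream (hypergeometric concentration, the two union bounds, and invocation of Corollary~\ref{thm:main}) is routine.
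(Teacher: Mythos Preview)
Your overall plan is exactly the paper's: show that each of the first $k$ vertices acquires $k$ infected neighbors in round~1, then invoke Corollary~\ref{thm:main}. The paper's proof does precisely this, computing that the expected number of seeded neighbors of $v_k$ is $\Omega(\log n)$ and asserting that this forces $\geq k$ seeded neighbors w.h.p.

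The gap is in your Azuma step. Even granting your bounded-difference claim of $m$ per arrival, the tail bound you obtain is $\exp(-\Omega(t^2/n))$ with $t=\Theta(n^{p/2})$, i.e.\ $\exp(-\Omega(n^{p-1}))$. Since $p\le 1$ always in ${\rm PA}_{p,m}$, this is $\exp(-o(1))\to 1$ for $p<1$ and merely a constant for $p=1$; it is never $1-o(1/n)$. Moreover, the bounded-difference claim itself is suspect: in the Doob martingale $M_v=\mathbb{E}[d_n(s)\mid \text{history through }v]$, learning that $v$ attached to $s$ raises the conditional expectation of the \emph{final} degree by roughly $m\,(n/v)^{p/2}$, not by $m$, because of the rich-get-richer feedback. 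With the correct increments the sum of squares is still $\Theta(n)$ for $p<1$, so Azuma remains useless.

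The repair is not hard, and you are in fact demanding more than you need. You only union over $k=O(1)$ early vertices, so a per-vertex guarantee of $1-o(1)$ suffices; you do not need $1-o(1/n)$. A second-moment (Chebyshev) bound on $d_n(s)$, or an appeal to standard PA degree-concentration results (e.g.\ the Bollob\'as--Riordan analysis, or a multiplicative martingale normalizing $d_t(s)$ by its running mean), gives $d_n(s)\ge c\,n^{p/2}$ with probability $1-o(1)$, after which your hypergeometric step goes through verbatim. The paper's own proof glosses over this concentration issue entirely, so your instinct to make it explicit is right---just not via vertex-exposure Azuma.
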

\begin{proof}
We focus on the first  $k$ arrived nodes in the PA graph. By Lemma~\ref{lem:degree-early-nodes}, each of the first $k$ nodes have
expected degree of at least $m^{2}(1-p)\left(\frac{n}{k} \right)^{p/2}$. We focus on node $v_k$, the node that arrived at time $k$, from now on. If we infect $\Omega \left(n^{1-p/2} \log n \right)$ nodes, $v_k$ would have $\Omega(\log n)$ infected neighbors in expectation in round $0$.
This would mean that with high probability $v_k$ would have $\geq k$ infected neighbors in round $0$.
This means that all the first $k$ nodes will be infected with high probability in round $1$. Once the first $k$ nodes are infected, Corollary~\ref{thm:main} can be applied to show that the speed of contagion is $O(\log n)$ whp.
\end{proof}

%

%





\section{Complexity of Computing The Extent of Complex Contagions}\label{sec:complexity}
In this section, we present a computational complexity result regarding the computation
of complex contagions.
We show that it is $\Po$-complete to decide if a $k$-complex contagion completely infects
a graph or stops at a small fraction of its nodes. The reduction comes from the \defn{MonotoneCircuitValue}
problem in circuit complexity.

\begin{definition}  In the \defn{MonotoneCircuitValue} \textsf{(MCV)} problem we are given a circuit $C$ with \texttt{0}, \texttt{1}, \texttt{AND}, and \texttt{OR} gates and one gate $g_*$ designated as $output$.  We insist that $C$ is layered, that is we can partition the gates into levels $\{\texttt{0}, \texttt{1}\} = L_0, L_1, \ldots, L_{\ell - 1},  L_{\ell} = \{g_*\}$ such that wires always connect gates at levels $i$ and $i+1$ for some $0 \leq i \leq \ell -1$.
$C \in \textsf{MCV}$ if the circuit is a properly encoded, layered, monotone circuit and evaluates to \texttt{1}.  Otherwise $C \not\in \textsf{MCV} $.
\end{definition}

\begin{theorem} [From \cite{Goldschlager77}] The \textsf{MCV} problem is $\Po$-complete.
\end{theorem}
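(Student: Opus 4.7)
The statement is cited from Goldschlager's classical 1977 paper, so let me sketch the standard route to proving it. The proof has two parts: membership of \textsf{MCV} in $\Po$ and $\Po$-hardness of \textsf{MCV}.

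Membership is the easy part. Given a properly encoded layered monotone circuit $C$ with levels $L_0, L_1, \ldots, L_\ell$, I would just evaluate gate values level by level, bottom up. Each gate's value depends on constantly many wires (fan-in two suffices by splitting higher fan-in gates), so the total work is linear in the circuit's size, which is trivially polynomial. Hence \textsf{MCV} admits a polynomial-time (indeed log-space) decision procedure.

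The real work is $\Po$-hardness. The plan is a generic reduction: for any language $L$ decided by a deterministic Turing machine $M$ running in time $n^{O(1)}$, I would reduce instances $x$ of $L$ to circuits $C_x$ such that $x \in L$ iff $C_x$ evaluates to \texttt{1}. Following Cook--Levin, I would encode the computation tableau of $M$ on $x$ as a rectangular grid of cells indexed by (time, tape position), where each cell stores a constant-size symbol (tape symbol plus optional head/state). Each cell at time $t+1$ depends on the three adjacent cells at time $t$ via a local transition function, so I would synthesize a constant-size Boolean subcircuit for each cell that reads the three previous cells and outputs the new cell's bits. Stacking these subcircuits level by level gives a layered circuit whose output gate $g_*$ is \texttt{1} iff $M$ accepts $x$. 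Clearly the construction runs in polynomial (indeed log-space) time.

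The main obstacle will be the monotonicity requirement, since the transition subcircuits naturally want \texttt{NOT} gates. The standard fix is dual-rail logic: I would represent each Boolean wire $b$ by a pair $(b, \bar b)$ carrying the value and its complement. An input bit is fed in as $(b, 1 \oplus b)$ at level $L_0$, negations are free (just swap the two rails when wiring to the next gate), and each cell's transition function, written in disjunctive normal form over literals, becomes a monotone formula over the dual-rail wires whose output is again a dual-rail pair. Since each cell's local function has constant arity, its DNF has constant size, so this blowup is only a constant factor per cell. Padding with identity gates keeps the circuit strictly layered. The only thing to verify is the invariant that at every level the two rails of every wire really are complementary, which follows by induction on the level using that the initial cells satisfy the invariant and the local transitions preserve it. This yields a polynomial-size layered monotone circuit $C_x$ evaluating to $M(x)$, completing the $\Po$-hardness reduction.
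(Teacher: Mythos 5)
The paper does not prove this theorem at all; it simply imports it from Goldschlager's 1977 paper, so there is no in-paper argument to compare against. Your sketch is the standard proof of that classical result --- membership by level-by-level evaluation, $\Po$-hardness via a Cook--Levin-style tableau reduction to the circuit value problem followed by the dual-rail transformation to push negations to the (constant) inputs while preserving layering and log-space computability of the reduction --- and it is correct as stated.
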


\begin{theorem}  \label{thm:complexity}
For any integer $k \geq 2$, given a triple $(G, S, M)$ where $G$ is an undirected graph, $S$ is a subset of vertices, and $M$ is an integer,  it is $\Po$-complete to determine if the size of the resulting $k$-complex contagion on $G$ when the vertices of $S$ are initially infected is at least $M$. Let $n$ be the number of vertices in $G$. In fact for any $0 < \epsilon < 1$, the promise problem of deciding if the size of the resulting $k$-complex contagion is $n$ or at most  $n^{\epsilon}$, is promise $\Po$-complete.
\end{theorem}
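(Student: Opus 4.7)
The plan is to prove containment in $\Po$ by direct simulation, and to show $\Po$-hardness by reducing from \textsf{MCV} via gate gadgets that exploit the threshold arithmetic of the $k$-complex contagion. Containment is routine: simulate the process round by round; since every non-stabilized round infects at least one new vertex, the simulation halts within $n$ rounds, and each round only needs to scan, for every not-yet-infected vertex, whether at least $k$ of its neighbors are currently infected. Counting the size at the end and comparing with $M$ is trivially polynomial.

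For the hardness reduction, given a layered monotone circuit $C$ with gates $L_0, \dots, L_\ell$ and output $g_*$, I would introduce a vertex $v_g$ for each non-constant gate $g$, place the constant-\texttt{1} input into $S_C$ and leave the constant-\texttt{0} input out. I would also introduce two reusable pools of auxiliary seed vertices $H_\vee$ of size $k-1$ and $H_\wedge$ of size $k-2$, all placed in $S_C$ and therefore permanently infected. For an AND gate $g = a \wedge b$, the vertex $v_g$ is joined to $v_a$, $v_b$, and every element of $H_\wedge$, giving $v_g$ exactly $k$ neighbors, so it crosses the threshold precisely when both $v_a$ and $v_b$ are infected. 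For an OR gate $g = a \vee b$, the vertex $v_g$ is joined to $v_a$, $v_b$, and every element of $H_\vee$, giving $v_g$ exactly $k+1$ neighbors of which $k-1$ are already in $S_C$, so it crosses the threshold precisely when at least one of $v_a, v_b$ is infected. An induction on the level $i$ then shows that $v_g$ fires by round $i$ iff $g$ evaluates to \texttt{1}; choosing $M$ to be the total number of permanently-seeded vertices plus one captures $C \in \textsf{MCV}$ via whether $v_{g_*}$ eventually fires, giving $\Po$-hardness of the basic problem.

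To boost this into the promise gap between $n$ and $n^\epsilon$, I would attach an amplification layer of $N$ fresh \emph{amplifier} vertices, each joined to $v_{g_*}$ and to every element of $H_\vee$, so each amplifier has exactly $k$ neighbors and fires iff $v_{g_*}$ does. The number of vertices that can ever be infected in the \texttt{0}-case is bounded by $|H_\vee| + |H_\wedge| + |C|$, independent of $N$, so choosing $N = \Theta\bigl((|C|+k)^{1/\epsilon}\bigr)$ inflates the total vertex count $n$ so that $n^\epsilon$ comfortably dominates the \texttt{0}-case bound, while the \texttt{1}-case saturates all $n$ vertices through the amplifiers. The construction is clearly log-space computable, which is what is required for reducing from $\Po$-complete \textsf{MCV}.

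The main obstacle I expect is verifying that the gadget admits no unintended propagation paths: one must check that each helper in $H_\vee$, $H_\wedge$ and each amplifier contributes only to the thresholds of its intended neighbors, that helpers are never reused as inputs to any gate (so they only \emph{pad} counts rather than \emph{carry} signals), and that the degree of each gate vertex is tight enough that it cannot cross its threshold from sources other than its two designated input wires plus its private helper pool. Once this disjointness bookkeeping is set up, the inductive correctness claim for the simulated Boolean evaluation, and the arithmetic for the $n$ versus $n^\epsilon$ gap, are both routine.
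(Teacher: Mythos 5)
There is a genuine gap in your hardness reduction: the gate gadgets admit \emph{backward} propagation along the undirected gate-to-gate edges, and this breaks the ``no'' direction. The edge joining $v_a$ to a consumer gate $v_g$ is one and the same undirected edge serving as ``output of $a$'' and ``input of $g$,'' so the degree of $v_a$ is not just its two inputs plus its helper pool --- it also includes every downstream consumer. Concretely, take an \texttt{AND} gate $a = a_1 \wedge a_2$ with $a_1$ true and $a_2$ false, feeding an \texttt{OR} gate $g = a \vee c$ with $c$ true. Then $v_g$ fires legitimately, and $v_a$ now sees $v_{a_1}$ (infected), the $k-2$ helpers of $H_\wedge$ (infected), and $v_g$ (infected): that is $k$ infected neighbors, so $v_a$ fires even though gate $a$ evaluates to \texttt{0}. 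This false signal then propagates forward monotonically and can reach $v_{g_*}$, so a circuit evaluating to \texttt{0} can light up the amplifiers. Your own ``disjointness bookkeeping'' remark assumes the gate vertex's degree is controlled by its inputs and private helpers alone, but that is exactly what fails. The paper's construction is designed around this: each gate becomes $k$ replicated vertices and each wire becomes a $k^2$-vertex gadget arranged so that every wire vertex has exactly \emph{one} neighbor on the downstream side; since $k \geq 2$, a wire vertex can never fire from downstream alone, and it needs $k-1$ of the upstream gate copies, which fire all-or-nothing. Some analogous decoupling of a gate's output side from its input count is indispensable, and your single-vertex-per-gate gadget cannot provide it.

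A secondary, more easily repaired error: for the basic (non-promise) problem you set $M$ to be the number of seeds plus one, but then the instance is a ``yes'' instance whenever \emph{any} non-seed vertex fires --- e.g.\ any level-$1$ \texttt{OR} gate fed by the constant \texttt{1} --- regardless of the output value. You need the amplifier population to dominate the count in the basic version too (as the paper does with its set $T$ of $M$ vertices hanging off the output gate), so that crossing the threshold $M$ is possible only when the output fires. The containment in $\Po$ by simulation and the overall shape of the reduction (MCV, layered gadgets, a large terminal layer to create the $n$ versus $n^{\epsilon}$ gap) do match the paper; the failure is localized to the gadget design.
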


\begin{proof}
These problems are in $\Po$ or promise-$\Po$ because an algorithm can simply simulate the contagion and count the number of infected nodes.  To show the hardness result the idea is to reduce from {\textsc MCV}.  Given such a circuit $C$ we create a graph as follows:

Fix $\epsilon, k$.   Given a circuit $C$ with $m$ gates we create the triple $(G, S, M)$ as follows:

Let $M = (3 k^3 m)^{1/\epsilon}$.

We next create the vertices of $G$:
\begin{itemize}
\item  For each gate $g_a$ of $C$ , we create $k$ vertices $G_a = \{ g_a^i\}_{0 \leq i < k}$.
\item  For each wire $w_{ab}$ of $C$ connecting gate $g_a$ to gate $g_b$, create $k^2$ vertices $W_{ab} = \{w_{ab}^{i, j}\}_{0 \leq i, j < k}$.
\item  Create $M$ additional vertices $T = \{t_i\}_{0 \leq i < M}$.
\end{itemize}


Next, we create the edges:
\begin{itemize}
\item  Consider a non constant gate $g_c$ of $C$ with input gates $g_a$ and $g_b$ (assume an arbitrary ordering over the input gates).
\begin{itemize}
   \item Add the $k^3$ edges to connect all vertices in $G_x$ to all vertices in $W_{xc}$ for $x \in \{a, b\}$.
   \item If $g_c$ in an \texttt{OR} gate, connect  $w_{x,c}^{i, j}$ to $g_c^i$ for $0 \leq i, j < k$ for $x \in \{a, b\}$.
   \item If $g_c$ in an \texttt{AND} gate, connect  $w_{a,c}^{i, j}$ to $g_c^i$ for $0 \leq i < k$ and $0 \leq  j < \lceil k/2 \rceil$.
   \item If $g_c$ in an \texttt{AND} gate, connect   $w_{b,c}^{i, j}$ to  $g_c^i$ for $0 \leq i < k$ and $0 \leq  j < \lfloor k/2 \rfloor$.
\end{itemize}
\item Add the $k^2$ edges between $G_*$ and $t_i$ for $0 \leq i < M$.
\item For all the vertices $v \in G \setminus T$, add $k$ edges between $v$ and $k$ vertices of $T$. 
But, each vertex of $M$ can only be used once.  Let $R = 3 k^2 m$.  Because every gate has at most 2 
in-wires, and each gate/wire has at most $k^2$ corresponding nodes, $R$ is an upper 
bound on the number of vertices not in $T$. Therefore, $M = (3 k^3 m)^{1/\epsilon}>3 k^3 m=kR$ is
big enough to satisfy the use-once constraint on the vertices of $T$.
\end{itemize}

Let $S = G_{\texttt{1}}$, the vertices corresponding to the constant \texttt{1} gate.

It is easy to verify that $(G, S, M)$ can be constructed in logspace\footnote{Note that multiplication, powering, and division are known to be in logspace~\cite{Hesse02}.  However, these results are not needed if we simply compute a number $M >(3 k^3 m)^{1/\epsilon}$}.

Now, we will show that $T$ is infected if and only if $C$ evaluates to 1.
The proof will follow from the following lemma:

\begin{lemma}
Consider a $t$, where $0 \leq t \leq 2 \ell$.  If $t$ is even, we claim that the only newly infected nodes at time $t$ correspond to gates at level $t/2$ in $C$ which evaluate to \texttt{1}.  If $t$ is odd, we claim that the only newly infected nodes at time $t$ correspond to the wires $w_{ab}$ connecting gates at level  $(t-1)/2$ and  $(t+1)/2$ where the gate at level $(t-1)/2$ evaluates to \texttt{1}.
\end{lemma}

Using the lemma, at time $2 \ell$ the only nodes that can possibly become infected are those corresponding to the output gate.  If they do become infected, then at time $2t+1$ all the nodes of $T$ will become infected. Ultimately, at time $2t+2$ all the graph will become infected.

Notice that each node in $T$ only has one edge outside the nodes of output gate $G_{*}$.
Therefore, if at time $2 \ell$ the output gate does not become infected, then at that step no additional nodes become infected and the contagion is over.

We prove the lemma first:

\begin{proof}
The proof proceeds by induction. At time $t = 0$ this is true, because the only nodes at level 0 are constant gates, and the only constant gates that evaluate to \texttt{1} is the \texttt{1} gate. By construction $G_{\texttt{1}} = S$ and so these vertices are initially infected at time $t = 0$.

Assume that the statement is true up to time $t < 2 \ell$.  We will show that the statement is true at time $t+1$.

\paragraph*{The case where $t$ is even}
At the next step, time $t+1$, any node that becomes infected must be connected to a node that was infected at time $t$.   By the inductive hypothesis, the only nodes that become infected at time $t$ are those that correspond to gates at level $t/2$.  By construction,  these nodes are connected to nodes corresponding to wires connecting gates  at level $t/2-1$ and level $t/2$ as well as nodes corresponding to wires connecting gates at level $t/2$ and level $t/2+1$.

The nodes $W_{ab}$ that correspond to wires $w_{ab}$ connecting a gate $g_a$ at level $t/2-1$ and a gate $g_b$ level $t/2$ are, by construction, attached to the nodes $G_a$ and the nodes $G_b$.  The nodes of $W_{ab}$ can only be infected at time $t+1$ if they were not already infected at time $t$.  By the inductive hypothesis, the nodes of $W_{ab}$ are not infected at time $t$ if and only if $g_a$ evaluates to \texttt{0} in which case, again by the inductive hypothesis, the nodes of $G_a$ are also not infected at time $t$.  However, if the nodes corresponding to $G_a$ are not infected at time $t$, then the nodes in $W_{ab}$ will not be infected at time $t+1$ as, by construction, each node in $W_{ab}$ has only one neighbor outside of $G_a$ and $k \geq 2$.

\paragraph*{The case where $t$ is odd}
At time $t+1$, any node that becomes infected must be connected to a node that was infected at time $t$.   By the inductive hypothesis, the only nodes that become infected at time $t$ are those that correspond to wires that connect nodes in level $(t-1)/2$ and level $(t+1)/2$.  By construction,  these nodes are connected to nodes corresponding to gates at level $(t-1)/2$ and level $(t+1)/2$.  By the inductive hypothesis, all the neighbors that these newly infected nodes' wires connect to at level $(t-1)/2$ are already infected.  Let's consider then the nodes corresponding to gates at level $(t+1)/2$.

If $g_c$ is an \texttt{OR} gate with inputs $g_a$ and $g_b$, then, by construction, each node in $G_c$ is attached to $k$ nodes in $W_{ac}$ and $k$ nodes in $W_{bc}$.  Thus, if either $g_a$ or $g_b$ evaluate to 1, then, by the inductive hypothesis, either the nodes in  $W_{ac}$ or the nodes in $W_{bc}$ will be infected at time $t$ and thus at time $t+1$ the nodes in $G_c$ will become infected.   On the other hand, if neither $g_a$ or $g_b$ evaluate to \texttt{1}, then, by the inductive hypothesis, neither the nodes in  $W_{ac}$ or the nodes in $W_{bc}$ will be infected at time $t$.  By construction, any other neighbors of nodes in $G_c$ correspond to wires connecting gates at level $(t-1)/2$ and $(t+1)/2$.  By the inductive hypothesis, these gates are not infected at time $t$.  Thus, the nodes of $G_c$ will not be infected at time $t+1$.

If $g_c$ is an \texttt{AND} gate with inputs $g_a$ and $g_b$, then, by construction, each node in $G_c$ is attached to $\lceil k/2 \rceil$ nodes in $W_{ac}$ and $\lfloor k/2 \rfloor$ nodes in $W_{bc}$.  Thus, if both $g_a$ or $g_b$ evaluate to 1, then, by the inductive hypothesis, the nodes in  $W_{ac}$ and the nodes in $W_{bc}$ will be infected at time $t$ and thus at time $t+1$ the nodes in $G_c$ will become infected.   On the other hand, if either $g_a$ or $g_b$ evaluate to \texttt{0}, then, by the inductive hypothesis, either the nodes in  $W_{ac}$ or the nodes in $W_{bc}$ will be not infected at time $t$.  By construction, any other neighbors of nodes in $G_c$ correspond to wires connecting gates at level $(t-1)/2$ and $(t+1)/2$.  By the inductive hypothesis, these gates are not infected at time $t$.  Thus, the nodes of $G_c$ will not be infected at time $t+1$.

\end{proof}
The reduction is complete because:
\begin{itemize}
\item Thus, if $C \in \textsf{MCV}$, then $T$ becomes infected and at least $M$ nodes (and all the nodes in the graph) are infected.
\item If $C \not\in \textsf{MCV}$, then $T$ does not become infected. Remember that $R=3k^2 m$ is an upper bound on the number of vertices not in $T$ and thus an upper bound on the number of nodes that become infected.  But $R=3k^2 m<M^{\epsilon}=3k^3m <n^{\epsilon}$.  Thus, fewer than  $n^{\epsilon}$ nodes are infected.
\end{itemize}
\end{proof}

\end{document}